\documentclass[10pt,a4paper]{article}
 \usepackage[colorlinks=true]{hyperref}
\hypersetup{urlcolor=blue, citecolor=red}

\usepackage{amsmath, amssymb, amsfonts, amsthm}
\usepackage{mathrsfs}

\usepackage{enumerate}  

\usepackage{dsfont}    
\usepackage{url}       
\usepackage{color}     
\usepackage{tikz}      
\usetikzlibrary{positioning} 
\usepackage{wrapfig}   
\usepackage{nccmath}   

\makeatletter
\renewcommand*\env@matrix[1][c]{\hskip -\arraycolsep
  \let\@ifnextchar\new@ifnextchar
  \array{*\c@MaxMatrixCols #1}}
\makeatother

\makeatletter
  \newcommand{\eqnum}{\leavevmode\hfill\refstepcounter{equation}\textup{\tagform@{\theequation}}} 
\makeatother

\newcommand{\A}{\mathcal{A}}

\newcommand{\Fi}{\mathrm{Fi}}
\newcommand{\bw}{\mathrm{bw}}
\newcommand{\fw}{\mathrm{fw}}
\newcommand{\I}{\mathcal{I}}

\renewcommand{\L}{\mathcal{L}}

\newcommand{\NN}{\mathbb{N}}

\renewcommand{\P}{\mathcal{P}}
\newcommand{\PP}{\mathbb{P}}
\newcommand{\Q}{\ensuremath{\mathcal{Q}}}
\newcommand{\RR}{\mathbb{R}}
\newcommand{\R}{\mathcal{R}}
\newcommand{\sR}{\mathcal{R}}
\newcommand{\sY}{\mathcal{Y}}
\newcommand{\Y}{\mathcal{Y}}
\renewcommand{\S}{\mathcal{S}}
\newcommand{\tp}{^{\mathsf{T}}}

\newcommand{\ceq}{c^{\mathrm{eq}}}
\newcommand{\sym}{^{\mathrm{sym}}}
\newcommand{\asym}{^{\mathrm{asym}}}
\newcommand{\as}{_{\mathrm{a}}^{\mathrm{s}}}
\newcommand{\sa}{_{\mathrm{s}}^{\mathrm{a}}}
\newcommand{\rev}{^{\mathrm{rev}}}
\newcommand{\irr}{^{\mathrm{irr}}}
\newcommand{\pair}[3]{\theta_{#3}\!\left({#1},{#2}\right)\!}

\newcommand{\llgg}{\;\substack{{\scriptscriptstyle\ll}\\[-0.15em]{\scriptscriptstyle\gg}}\;}

\let\div\relax
\DeclareMathOperator*\div{\mathrm{div}}

\DeclareMathOperator\grad{\nabla\!}

\DeclareMathOperator\Prob{Prob}

\newcommand{\super}[1]{^{\scriptscriptstyle{(#1)}}}



\newtheorem{theorem}{Theorem}[section]

\newtheorem{proposition}[theorem]{Proposition}
\newtheorem{corollary}[theorem]{Corollary}

\newtheorem{assumption}[theorem]{Assumption}
\newtheorem{example}[theorem]{Example}

\newenvironment{remark}
  {\par\medbreak\refstepcounter{theorem}%
    \noindent\textbf{Remark~\thetheorem. }}%
  {\qed\par\medskip}

\makeatletter

\makeatother

\setcounter{secnumdepth}{2} 
\numberwithin{equation}{section}

\title{Orthogonality of fluxes in general nonlinear reaction networks}
\author{D~ R.~Michiel Renger\thanks{renger@wias-berlin.de}\, and Johannes Zimmer\thanks{j.zimmer@bath.ac.uk}}





\begin{document}
\maketitle
	
\begin{abstract}
  We consider the chemical reaction networks and study currents in these systems. Reviewing recent decomposition of rate
  functionals from large deviation theory for Markov processes, we adapt these results for reaction networks.  In particular, we
  state a suitable generalisation of orthogonality of forces in these systems, and derive an inequality that bounds the free
  energy loss and Fisher information by the rate functional.
\end{abstract}

\section{Introduction}
\label{sec:introduction}

It is now becoming widely accepted that fluxes hold a key to understanding many phenomena in non-equilibrium
thermodynamics. Typically, non-equilibrium systems are forced out of equilibrium by external forces. Since such forces may cause
`divergence-free' fluxes, the forces can not be balanced by changes in mass densities unless one takes the full fluxes into
account. In fact, a lot of thermodynamic information can be extracted from microscopic fluctuations of fluxes on the
large-deviations scale; this idea is the basis of Macroscopic Fluctuation Theory (MFT)~\cite{Bertini2015a}.

Although the setting of our paper will be slightly different, we briefly describe the typical setting considered in MFT. One
considers a random particle system with particle density $C\super{n}(x,t)$ (we write $C$ instead of $\rho$ as usual for a density
to emphasise the analogy for what follows) and particle flux $J\super{n}(x,t)$ connected via the continuity equation
$\dot C\super{n}(t)= - \div J\super{n}(t)$, where $\div$ is a divergence operator. The parameter $n$ controls the number of
elements in the system. When this number is sent to infinity, this can lead --- in a suitable scaling of space and time -- to a
macroscopic limit:
\begin{align}
  &\dot c(t) = \Gamma j(t), \text{ with}
    \label{eq:flux-MFT}\\
  &j(t) = \kappa(c(t)),
    \label{eq:intro cont eq}
\end{align}
for some model-specific operator $\kappa$; here $\Gamma \xi = -\div \xi$ (we will consider similar structures on the level of
chemical networks below, see~\eqref{eq:chem-flux} below). The microscopic fluctuations around this macroscopic limit are often
characterised by a large-deviation principle. In the context considered here, this means that the probability of observing
trajectories which deviate from the macroscopic limit converges exponentially:
\begin{multline}
  \label{eq:quadratic ldp}
  \Prob\big( (c\super{n},j\super{n}) \approx (c,j)\big) \\ 
    \stackrel{n\to\infty}{\sim} \exp\!\left(-n \left[\I_0(c(0)) + 
    \int_0^T\!\lVert j(t)-\kappa(c(t))\rVert^2_{L^2(f(c(t))}\,dt\right]\right),
\end{multline}
for some model-specific function $f$; $\I_0$ is a functional depending on the initial data $(c\super{n}(0))_n$. For example, for
independent Brownian particles, $f(c)=c^{-1}$, and for the simple symmetric exclusion process $f(c)=c^{-1}(1-c)^{-1}$.

The corresponding inner product is a very powerful tool that can be used to define orthogonal decompositions
of fluxes into ``reversible'' and ``irreversible'' parts. With this decomposition the large-deviation rate function also
decomposes,
\begin{multline}
  \label{eq:quadratic split}
  \lVert j(t)-\kappa(c(t))\rVert^2_{L^2(f(c(t))} =\\ \lVert j(t)\rev-\kappa\rev(c(t))\rVert^2_{L^2(f(c(t))} 
  + \lVert j\irr(t)-\kappa\irr(c(t))\rVert^2_{L^2(f(c(t))}.
\end{multline}
Moreover, by duality one also obtains an inner product on (generalised) forces, allowing to orthogonally decompose forces into
reversible and irreversible parts.

In this work we consider a similar setting, however for a class of systems with a different noise so that the large deviations are
entropic rather than quadratic,
\begin{align}
  \label{eq:intro entropic RF}
  \Prob\big( (c\super{n},j\super{n}) \approx (c,j)\big) \stackrel{n\to\infty}{\sim}
  \exp\!\left(-n \left\lbrack\I_0(c(0)) + \int_0^T\!\S\big(j(t)\mid \kappa(c(t))\big)\,dt\right\rbrack\right),
\end{align}
see~\eqref{eq:entropic cost} below for the precise definition. To facilitate the intuitive picture, we interpret these systems as
models for chemical reactions, although the range of applicability is much wider. In particular, these models are more natural
than white-noise driven models when the underlying state space remains discrete. Their non-quadratic fluctuation costs do not
suggest a natural inner product, and so it is not evident how to meaningfully decompose the cost as in~\eqref{eq:quadratic split}
for the quadratic case.

Recently, a new concept of generalised orthogonality was introduced for entropic cost functions corresponding to independent
particles~\cite{Kaiser2018a}. In that case, the operator $\kappa$ is linear and the continuity operator will be the negative
discrete divergence $\Gamma=-\div$. In this work we extend this orthogonality concept to the more general case of randomly
inter(re-)acting particles. In Sections~\ref{sec:setting} and~\ref{sec:force structures}, we collect results from the literature,
adapting and extending them where needed to the context of chemical reaction networks. We therefore omit the proofs and sometimes
the precise technical assumptions, which can be found in the quoted literature. Sections~\ref{sec:orth}
and~\ref{sec:Fisher-bounds} give orthogonal decomposition results and bounds on the entropy and a term resembling the Fisher
information in the classical setting.

\section{Microscopic model, macroscopic limit and large deviations}
\label{sec:setting}

As mentioned we study a model for chemical reactions, although the mathematical structure is applicable to any Markov jump process
for which the large-deviation principle of the form~\eqref{eq:intro entropic RF} holds~\cite{Kaiser2018a,PattersonRenger2019}. Let
$\sY$ be the set of reactants involved, for example $\sY := \left\{ \mathrm{Na}, \mathrm{CL}_2, \mathrm{NaCL} \right\}$.  We
denote by $\sR$ the set of reactions, in the example
\begin{equation*}
  \sR :=
  \left\{2 \mathrm{Na} + \mathrm{CL}_2 \to 2\mathrm{NaCL}, \  2\mathrm{NaCL} \to 2 \mathrm{Na} + \mathrm{CL}_2 \right\}.
\end{equation*}
The stoichiometric information of a reaction $r$ is captured in the state change vector $\gamma\super{r}\in\RR^\sY$, which
describes with negative entries how many reactants of each species are consumed and with positive entries the creation of
products, for example $\gamma\super{r}:=(-2,-1,2)$ for the reaction $r \colon 2 \mathrm{Na} + \mathrm{CL}_2 \to 2\mathrm{NaCL}$.
We shall assume that $\sR$ is the disjoint union of the set of forward reactions $\sR_\fw$, in this example
$\sR_\fw:=\left\{2 \mathrm{Na} + \mathrm{CL}_2 \to 2\mathrm{NaCL}\right\}$, and the set of backward reactions $\sR_\bw$, such that
for each forward reaction $r\in\R_\fw$ there is exactly one backward reaction $\bw(r)\in\sR_\bw$ with:
\begin{equation}
  \label{eq:bw-r}
  \gamma^{\bw(r)}= - \gamma\super{r}.
\end{equation}
This assumption is no loss of generality, as we can always introduce phantom reactions with zero reaction rates, i.e., reactions
that exist on paper but do not take place. We sometimes also write $\bw(r)\in\R_\fw$ for $r\in\R_\bw$.  The state change vectors
are collected in the matrix $\Gamma:=\lbrack \gamma\super{r}\rbrack_{r\in\R}\in\RR^{\sY\times\sR}$. We deliberately use the same
notation as in~\eqref{eq:flux-MFT}, to emphasise the analogy of that equation to~\eqref{eq:chem-flux} below.


We now further specify the microscopic particle system that models such reaction network, and introduce the time-reversed
process. Next, we describe the macroscopic limit and the corresponding large deviations. We then briefly summarise in the
remainder of this section some results from~\cite{Renger2018a} which will be needed in the current work: time-reversal symmetries
and the relations between forward and backward reaction rates that can be derived.

\subsection{Microscopic model}
\label{subsec:micro model}

The microscopic model will be a Markov jump process consisting of randomly reacting particles in a large volume of size $V$, which
now controls number of particles in the system. We assume that there is no spatial dependence of the reactions. The
\emph{empirical concentration measure} is
\begin{equation*}
  C\super{V}_y(t) := \frac 1 V \#\big\{\text{particles of species } y \text{ present at time } t\big\}.
\end{equation*}
A reaction $r\in\R$ occurs randomly with concentration-dependent intensity $k\super{V}_r = k\super{V}_r(C\super{V}(t))$, upon
which the concentration is updated with $C\super{V}(t):=C\super{V}(t^-)+V^{-1}\gamma\super{r}$, where $C\super{V}(t^-)$ is the
limit from the left. These intensities $k\super{V}_r$ are also called the \emph{jump rates} or \emph{propensities}. In addition,
we would like to introduce a reaction flux $J\super{V}_r(t)$ that measures the amount of reactions taking place at time
$t$. However, since the process has jumps, it is mathematically easier to introduce the \emph{(time-)integrated flux},
\begin{equation*}
  W_r\super{V}(t) := \frac 1 V \# \left\{\text{reactions } r \text{ occurred in time } (0,t]\right\}.
\end{equation*}
The fluxes and concentrations are then related by the continuity equation $C\super{V}(t)=C\super{V}(0) + \Gamma W\super{V}(t)$, or
formally
\begin{equation}
  \label{eq:chem-flux}
  \dot C\super{V}(t)=\Gamma\dot W\super{V}(t)=\Gamma J\super{V}(t) . 
\end{equation}
We point out that this equation is the equivalent for chemical reactions to~\eqref{eq:flux-MFT}, for the MFT setting sketched in
the introduction. The pair $\left(C\super{V}(t), W\super{V}(t)\right)$ is now a Markov process in $\RR^\Y\times\RR^\R$; its
generator is given by
\begin{align*}
  (\Q\super{V}f)(c,w) := \sum_{r\in\R} k\super{V}_r(c)
  \left\lbrack f(c+\tfrac 1 V \gamma\super{r},w+\tfrac1V\mathds1_r) - f(c,w)\right\rbrack.
\end{align*}
We remark that these are the sole reasons for considering integrated fluxes: although the concentration $C\super{V}(t)$ is a
Markov process, $C\super{V}(t)$ paired with the non-integrated flux $J\super{V}(t)$ is not, and moreover, \eqref{eq:chem-flux}
only holds in a weak, measure-valued sense. However in the macroscopic regime $V\to\infty$, these issues do not play a role
anymore, so that later we can focus on the non-integrated fluxes.

We collect the needed technical assumptions on the Markov chain as follows.
\begin{assumption}
  \label{ass:nice process}
  Assume that for all $V$ the process $C\super{V}(t)$ is non-explosive on $(0,T)$ (that is, for almost all starting points, almost
  all trajectories do not exhibit infinitely many jumps in $(0,T)$) and remains almost surely within a compact set, typically a
  simplex that describes mass conservation. Furthermore, the process has a unique, coordinate-wise positive invariant measure
  $0<\pi\super{V}\in\P(\tfrac1V\NN_0^I)$. Integrated fluxes are assumed to satisfy $W\super{V}(0)=0$ almost surely in $V$.
\end{assumption}

\begin{example}
  \label{ex:CME}
  The typical example that is used to model microscopic chemical reactions is $k\super{V}_r(c)=\omega\super{r} V \prod_{y\in\Y}
  (1/V)^{\alpha\super{r}_y} \alpha\super{r}_y! {c_y V \choose \alpha\super{r}_y}$, where $\omega\super{r}$ is a reaction-specific
  constant and $\alpha_y\super{r}$ are the number of $y$-reactants consumed in reaction $r$. In that case the master equation for
  the Markov process $C\super{V}(t)$ is called the ``Chemical Master Equation'', and in many cases the invariant measure
  $\pi\super{V}$ is explicitly known and positive~\cite{Anderson2011a}. In general, explosion may occur for such models, but it
  can be ruled out by imposing an additional mass conservation assumption~\cite{PattersonRenger2019}.
\end{example}

\subsection{The time-reversed process}
\label{sec:Time-reversal}

Here we introduce the time-reversed process which will be crucial throughout the paper. For a path $(C\super{V}(t),W\super{V}(t))$
and the given final time $T$, we define the time-reversed path as
\begin{equation*}
  \left(\overleftarrow C\super{V}(t),\overleftarrow W\super{V}(t)\right)
  :=\left(C\super{V}(T-t),{W\super{V}}\tp(T) - {W\super{V}}\tp(T-t)\right),
\end{equation*}
where ${W\super{V}}\tp_{\bw(r)}(t):=W\super{V}_r(t)$; by construction, these time-reversed integrated fluxes
$\overleftarrow W\super{V}(t)$ remain non-negative, non-decreasing, and initially calibrated at $0$, as the original integrated
fluxes $\overleftarrow W\super{V}(t)$.

\begin{theorem}[{\cite[Proposition~4.1]{Renger2018a}}, time-reversal]
  Let $\PP\super{V}_{\Q\super{V},\pi\super{V}}$ be the path measure of the Markov process $(C\super{V}(t),W\super{V}(t))$ with
  generator $\Q\super{V}$ and initial distribution $\pi\super{V}\times\delta_0$. Then
  \begin{equation}
    \PP\super{V}_{\Q\super{V},\pi\super{V}}\big( (\overleftarrow C\super{V}(t),\overleftarrow W\super{V}(t)) \in \A\big) 
    = 
    \PP\super{V}_{\overleftarrow\Q\super{V},\pi\super{V}}\big( (C\super{V}(t),W\super{V}(t)) \in \A\big),
  \label{eq:time-reversal of path measures}
  \end{equation}
  where the reversed generator is
  \begin{align}
    \label{eq:reverse generator}
    (\overleftarrow\Q f)(w)
    = \sum_{r\in \R} \overleftarrow k\!\super{V}_{\bw(r)}(c)
    \big\lbrack f(c+\tfrac1V\gamma^{\bw(r)},w+\tfrac1V\mathds1_{\bw(r)})-f(c,w)\big\rbrack,
  \end{align}
  and the reversed rates are related to the forward rates through
  \begin{align}
    \label{eq:reversed rates}
    \overleftarrow k\!\super{V}_{\bw(r)}(c):=
    \frac{\pi\super{V}(c+\tfrac1V\gamma^{\bw(r)})}{\pi\super{V}(c)}
    k\super{V}_r(c+\tfrac1V\gamma^{\bw(r)}).
  \end{align}
\label{th:time reversal}
\end{theorem}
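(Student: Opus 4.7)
The plan is to reduce the distributional identity \eqref{eq:time-reversal of path measures} to a pointwise identity of path densities for the pure jump process $(C\super{V}(t), W\super{V}(t))$, and then to verify the identity via the definition \eqref{eq:reversed rates} together with the invariance of $\pi\super{V}$ in Assumption~\ref{ass:nice process}. Since $W\super{V}$ is uniquely determined by the sequence of jump times and reaction labels of $C\super{V}$, it suffices to compare the laws of such sequences. By the standard Radon--Nikodym/point-process formula for Markov jump processes, the probability density (with respect to Lebesgue measure on ordered jump times) under $\PP\super{V}_{\Q\super{V},\pi\super{V}}$ of a path with jump times $0<t_1<\cdots<t_n<T$, reaction labels $r_1,\ldots,r_n\in\R$ and starting state $c_0$ is
\[
  \pi\super{V}(c_0)\,\exp\!\left(-\int_0^T\! \lambda(C\super{V}(s))\,ds\right)\prod_{i=1}^n k\super{V}_{r_i}(C\super{V}(t_i^-)),
\]
with total jump rate $\lambda(c):=\sum_r k\super{V}_r(c)$; an analogous formula holds for $\PP\super{V}_{\overleftarrow\Q\super{V},\pi\super{V}}$ with $\overleftarrow k\!\super{V}$ and $\overleftarrow\lambda$ in place of $k\super{V}$ and $\lambda$.

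Next, I would transcribe what the reversal operation does to such a path. Since a forward reaction $r$ increments $C\super{V}$ by $\gamma\super{r}/V$, the reversed path has jumps at the mirrored times $T-t_n<\cdots<T-t_1$ and, by \eqref{eq:bw-r}, the $i$-th reversed jump is a reaction of type $\bw(r_i)$, starting from $C\super{V}(T)=c_T$. Writing out the density of this reversed event under $\PP\super{V}_{\overleftarrow\Q\super{V},\pi\super{V}}$ and performing the substitution $u=T-s$ in the exponential, the proof reduces to checking the two equalities
\[
  \pi\super{V}(c_0)\prod_{i=1}^n k\super{V}_{r_i}(C\super{V}(t_i^-)) = \pi\super{V}(c_T)\prod_{i=1}^n \overleftarrow k\!\super{V}_{\bw(r_i)}(C\super{V}(t_i))
\]
and $\int_0^T\!\lambda(C\super{V}(s))\,ds=\int_0^T\!\overleftarrow\lambda(C\super{V}(s))\,ds$.

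The product identity is the easy part: evaluating \eqref{eq:reversed rates} at $c:=C\super{V}(t_i)=C\super{V}(t_i^-)+\gamma\super{r_i}/V$ gives
\[
  \overleftarrow k\!\super{V}_{\bw(r_i)}(C\super{V}(t_i)) = \frac{\pi\super{V}(C\super{V}(t_i^-))}{\pi\super{V}(C\super{V}(t_i))}\,k\super{V}_{r_i}(C\super{V}(t_i^-)),
\]
and multiplying over $i$ the $\pi\super{V}$-ratios telescope, since $C\super{V}(t_i)=C\super{V}(t_{i+1}^-)$, to $\pi\super{V}(c_0)/\pi\super{V}(c_T)$.

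The step I expect to be the main obstacle, or at least the one where the \emph{hypotheses} of the theorem really enter, is the equality of the two exponential terms; it requires the pointwise identity $\overleftarrow\lambda(c)=\lambda(c)$. Summing \eqref{eq:reversed rates} over $r$ and reindexing via $r\mapsto\bw(r)$, this identity is equivalent to
\[
  \pi\super{V}(c)\sum_r k\super{V}_r(c) = \sum_r \pi\super{V}(c-\gamma\super{r}/V)\,k\super{V}_r(c-\gamma\super{r}/V),
\]
which is exactly the statement that $\pi\super{V}$ is invariant for the generator of $C\super{V}$ (inflow equals outflow at every state), guaranteed by Assumption~\ref{ass:nice process}. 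Combining the product and exponential identities, the two path densities coincide; summing over the number of jumps $n$, reaction labels and initial states, and applying a standard monotone-class argument to extend from cylinder sets of the above form to the Borel $\sigma$-algebra on path space, yields \eqref{eq:time-reversal of path measures}.
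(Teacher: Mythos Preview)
The paper does not actually prove this theorem: it is quoted from \cite{Renger2018a} (see the remark at the end of the introduction that proofs in Sections~\ref{sec:setting} and~\ref{sec:force structures} are omitted). So there is no in-paper argument to compare against.

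That said, your proof is correct and is precisely the classical path-density argument one would expect for time reversal of a Markov jump process. The two key identities you isolate are exactly the right ones: the telescoping of the $\pi\super{V}$-ratios via \eqref{eq:reversed rates} handles the product of jump intensities, and the equality $\overleftarrow\lambda=\lambda$ of total exit rates is nothing but the stationary master equation for $C\super{V}$, which is guaranteed by the invariance of $\pi\super{V}$ in Assumption~\ref{ass:nice process}. Two minor points of phrasing: (i) when you write that $W\super{V}$ is ``determined by the sequence of jump times and reaction labels of $C\super{V}$'', note that reaction labels are not in general readable from $C\super{V}$ alone (distinct $r$ may share the same $\gamma\super{r}$); what you really mean, and what suffices, is that a path of the \emph{pair} $(C\super{V},W\super{V})$ is parametrised by $(c_0,(t_i,r_i)_i)$; (ii) time reversal turns c\`adl\`ag paths into c\`agl\`ad ones, so strictly speaking one passes to the c\`adl\`ag modification before applying the jump-process density formula under $\PP\super{V}_{\overleftarrow\Q\super{V},\pi\super{V}}$, which is harmless since jump times form a null set.
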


\begin{remark}
  A special role is played for systems that satisfy ``microscopic detailed balance'', i.e., for $V>0$:
\begin{align*}
  \pi\super{V}(c) k\super{V}_r(c) = \pi\super{V}(c+\tfrac1V\gamma\super{r}) k\super{V}_r(c+\tfrac1V\gamma\super{r})
    &&
  \text{for all } c\in\RR^\Y \text{ and } r\in\R.
\end{align*}
Indeed, by~\eqref{eq:reversed rates}, this is equivalent to $\overleftarrow k\!\super{V}_{r}(c) = k\super{V}_r(c)$, and hence to
the reversibility of the Markov process, i.e., $\overleftarrow{\Q}\super{V}=\Q\super{V}$ in~\eqref{eq:time-reversal of path
  measures} (not to be confused with thermodynamic reversibility). As we will see, on a macroscopic level this condition
corresponds to conservative forces, or the absence of external forces. Since we are mainly interesting in the behaviour of systems
undergoing external forces, we shall not assume this condition, but merely use it as a validity check at some places.
\label{rem:micro DB}
\end{remark}

\subsection{Macroscopic limit and large deviations}
\label{subsec:limit and ldp}

We now study the macroscopic behaviour of the system in the limit $V\to\infty$. To this aim we need to make the following
assumption.
\begin{assumption}
  Let the jump rates converge in average to reaction rates, that is, $\sup_c \lvert V^{-1}k\super{V}_r(c)-\kappa_r(c)\rvert\to 0$
  as $V\to\infty$. Furthermore, we assume that the rates $\kappa_r$ satisfy the assumptions of~\cite{PattersonRenger2019} allowing for
  a large-deviation principle; in particular, a rate $\kappa_r(c)$ has to vanish if the associated reaction would lead to negative
  concentrations, and the mapping $c\mapsto \kappa_r(c)$ must be non-decreasing, Lipschitz continuous,
  superhomogeneous~\cite[Assumption~2.2(vi)]{PattersonRenger2019} and bounded on the compact set of concentrations of
  Assumption~\ref{ass:nice process}.  We require the invariant distributions $\pi\super{V}\in\P(\tfrac1V\NN_0^\sY)$ to satisfy a
  large-deviation principle
  \begin{equation}
    \pi\super{V}(C\super{V}\approx c) \stackrel{V\to\infty}\sim \exp\big(-V\I_0(c)\big),
  \label{eq:I0}
  \end{equation}
  for some $\I_0\colon\RR^\sY_+\to\lbrack0,\infty\rbrack$. Finally, we assume that $\I_0$ is almost everywhere differentiable, as
  it holds for example for convex functionals.
\label{ass:kappa}
\end{assumption}

By a straightforward extension of Kurtz' Theorem~\cite{Kurtz1970a}, in the limit $V\to\infty$, the random path
$\big(C\super{V}(t),W\super{V}(t)\big)$ converges (in measure) to the deterministic solution of the coupled equations
\begin{align}
  \label{eq:kurtz}
  \dot c(t) = \Gamma \dot w(t),
   \text{ and }
  \dot w(t) = \kappa\big(c(t)\big),
\end{align}
which is the chemical reaction equivalent of~\eqref{eq:flux-MFT} and~\eqref{eq:intro cont eq}.

Since the noise is essentially a re-scaled Poissonian, a large-deviations principle holds, which we now recall
(see~\cite{PattersonRenger2019} for the precise statement). In the following theorem and throughout the paper we adapt the notation of
the relative entropy:
\begin{equation}
  \S(j\mid \kappa(c)):=\sum_{r\in\sR} j_r\log\mfrac{j_r}{\kappa_r(c)} - j_r + \kappa_r(c).
  \label{eq:entropic cost}
\end{equation}

\begin{theorem}[{\cite{PattersonRenger2019}}]
  As in Theorem~\ref{th:time reversal}, let $\PP\super{V}_{\Q\super{V},\pi\super{V}}$
  be the path measure of the Markov process $(C\super{V}(t),W\super{V}(t))$
  with generator $\Q\super{V}$
  and initial distribution $\pi\super{V}\times\delta_0$. Then for any $(c,w)\in W^{1,1}(0,T;\RR^\Y\times\RR^\R)$
  \begin{multline*}
    \PP\super{V}_{\Q\super{V},\pi\super{V}}\!\left( (C\super{V},W\super{V}) \approx (c,w)\right)
    \\ \stackrel{V\to\infty}{\sim}
    \exp\left(-V\!\left[\I_0(c(0)) + \int_0^T\!\S\big(\dot w(t)\mid\kappa(c(t))\big)\,dt\right]\right).
  \end{multline*}
  \label{th:ldp}
\end{theorem}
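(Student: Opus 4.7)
My plan is to combine the initial LDP from Assumption~\ref{ass:kappa} with a conditional path LDP via exponential tilting. Since the initial distribution factorises as $\pi\super{V}\otimes\delta_0$ and the process is Markov, by the standard LDP-decomposition for Markov chains started from random initial conditions it is enough to establish, for each deterministic $c_0$, a conditional path LDP on $\{C\super{V}(0)=c_0\}$ with rate $\int_0^T\S(\dot w(t)\mid\kappa(c(t)))\,dt$; the term $\I_0(c(0))$ then adds by Assumption~\ref{ass:kappa}, and the continuity equation $\dot c=\Gamma\dot w$ is preserved throughout.

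For the conditional path LDP the natural tool is a Girsanov-type change of measure for pure jump processes. Fix a target $(c,w)\in W^{1,1}(0,T;\RR^\sY\times\RR^\sR)$ with $\dot c=\Gamma\dot w$ and define a tilted measure $\tilde\PP\super{V}$ by replacing each reaction channel $r$ by a time-inhomogeneous, state-independent rate $\tilde k\super{V}_r(s):=V\dot w_r(s)$. For jump processes the Radon--Nikodym derivative is explicit:
\begin{equation*}
  \log\frac{d\PP\super{V}_{\Q\super{V}}}{d\tilde\PP\super{V}}
  = V\sum_r\!\int_0^T\!\log\frac{V^{-1}k\super{V}_r(C\super{V})}{\dot w_r}\,dW\super{V}_r
  - V\sum_r\!\int_0^T\!\bigl(V^{-1}k\super{V}_r(C\super{V})-\dot w_r\bigr)\,ds.
\end{equation*}
Under $\tilde\PP\super{V}$, a Kurtz-type law of large numbers forces $(C\super{V},W\super{V})\to (c,w)$. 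Substituting this and the convergence $V^{-1}k\super{V}_r\to\kappa_r$ evaluates the log-derivative to $-V\!\int_0^T\!\S(\dot w\mid\kappa(c))\,dt+o(V)$ in $\tilde\PP\super{V}$-probability, yielding the lower bound by the standard change-of-measure argument.

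The matching upper bound rests on exponential tightness in $W^{1,1}(0,T;\RR^\sY\times\RR^\sR)$, which follows from the uniform boundedness of $\kappa_r$ on the invariant compact set of Assumption~\ref{ass:nice process} combined with Chebyshev applied to the exponential martingales of $\Q\super{V}$. The local upper bound is obtained by Legendre-transforming the Hamiltonian $H(c;q):=\sum_r\kappa_r(c)(e^{q_r}-1)$, whose Fenchel conjugate in $\dot w$ is exactly $\S(\dot w\mid\kappa(c))$. The main obstacle I expect is the degeneracy of $\kappa_r$ on the boundary of the positive cone: the tilted rates $\tilde k\super{V}_r$ become singular with respect to $k\super{V}_r$ wherever $\kappa_r(c)$ vanishes, and the Radon--Nikodym derivative can fail to exist. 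The superhomogeneity of $\kappa_r$ from Assumption~\ref{ass:kappa} rescues the argument, forcing $\dot w_r=0$ whenever $\kappa_r(c)=0$ along finite-cost trajectories; any trajectory violating this has $\S(\dot w\mid 0)=+\infty$ and therefore contributes nothing to either bound.
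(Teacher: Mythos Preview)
The paper does not give its own proof of this statement: Theorem~\ref{th:ldp} is quoted verbatim from \cite{PattersonRenger2019}, in line with the announcement in the introduction that the results in Sections~\ref{sec:setting} and~\ref{sec:force structures} are collected from the literature with proofs omitted. There is therefore no in-paper argument to compare your proposal against.

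That said, your outline is the standard route to such flux large-deviation principles and is, as far as one can tell from the assumptions the paper imports, the strategy of the cited reference: decouple the initial contribution $\I_0$ via the assumed LDP for $\pi\super{V}$, then prove the dynamic part by a Girsanov tilt towards independent time-inhomogeneous Poisson channels with intensities $V\dot w_r$ for the lower bound, and by exponential martingales/Varadhan--Bryc with Hamiltonian $H(c,q)=\sum_r\kappa_r(c)(e^{q_r}-1)$ for the upper bound. Your identification of the boundary $\{\kappa_r(c)=0\}$ as the genuine difficulty, and of the superhomogeneity hypothesis in Assumption~\ref{ass:kappa} as the device that forces $\dot w_r=0$ along finite-cost paths there, matches exactly the role that assumption plays in \cite{PattersonRenger2019}. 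One point you gloss over is that the tilt also degenerates when $\dot w_r(t)=0$ while $\kappa_r(c(t))>0$; this is handled, as usual, by first proving the lower bound for strictly positive smooth $\dot w$ and then approximating in the rate, which the lower semicontinuity of $\S(\cdot\mid\kappa)$ accommodates. Likewise, exponential tightness is more naturally obtained in the Skorokhod topology, with the identification of finite-rate paths as $W^{1,1}$ coming a posteriori from $\L<\infty$; your phrasing ``exponential tightness in $W^{1,1}$'' is a slight overreach but not a substantive error.
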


The non-negative cost functional $\int_0^T\!\S\big(\dot w(t)\mid\kappa(c(t))\big)\,dt$ can be interpreted as the free energy
required to force the system to follow a given path, $(c,w)$ rather than the path that solves the macroscopic
equation~\eqref{eq:kurtz}, whereas $\I_0$ measures the cost of deviating from the given initial data. This interpretation goes
back to~\cite{Onsager1953a}; see also~\cite{Mielke2017a} for a more detailed account.

\begin{example}
  For the Chemical Master Equation discussed in Example~\ref{ex:CME}, the corresponding reaction rates are
  $\kappa\super{r}(c)=\omega\super{r}\prod_{y\in\Y} c_y^{\alpha\super{r}_y}$. This specific form of the reaction rates is called
  \emph{mass-action kinetics}~\cite{Kurtz1970a, Anderson2011a}. Moreover, in that case $\I_0(c):=\S(c\mid \ceq)$, where $\ceq$ is
  the equilibrium concentration $0=\Gamma\kappa(\ceq)$~\cite{Mielke2017a}.
\end{example}

\subsection{Time reversal symmetries and implications}
\label{subsec:time reversal symmetries}

The large-deviation result Theorem~\ref{th:ldp} also applies to the time-reversed process
$(\overleftarrow C\super{V}(t),\overleftarrow W\super{V}(t))$. Combining this with the time reversal result of
Theorem~\ref{th:time reversal} yields the following time-reversal symmetry:
\begin{corollary}
  For any path $(c,w)\in W^{1,1}(0,T;\RR^\Y\times\RR^\R)$, it holds
  \begin{equation}
    \label{eq:time-reversal ldp}
    \I_0(c(0)) + \int_0^T\!\S\big(\dot w(t)\mid\kappa(c(t))\big)\,dt = \I_0(c(T)) + \int_0^T\!\S\big(\dot w\tp(t)\mid 
    \overleftarrow\kappa(c(t))\big)\,dt,
  \end{equation}
  or (using $\dot c=\Gamma\dot w$), for a dense set of $(c,j)=(c,\dot w)\in\RR^\Y\times\RR^\R$ for which the chain rule holds,
  \begin{equation}
    \label{eq:time-reversal}
    \S(j\mid \kappa(c))) - \S(j\tp\mid\overleftarrow\kappa(c)) = \grad\I_0(c)\cdot \Gamma j.
  \end{equation}
\end{corollary}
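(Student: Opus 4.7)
The plan is to derive identity~\eqref{eq:time-reversal ldp} by combining Theorem~\ref{th:time reversal} with Theorem~\ref{th:ldp}, applied in parallel to the forward and the reversed processes, and then to obtain the pointwise statement~\eqref{eq:time-reversal} by differentiating~\eqref{eq:time-reversal ldp} via the chain rule for $\I_0$.

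For the first identity, I would fix a path $(c,w)\in W^{1,1}(0,T;\RR^\Y\times\RR^\R)$ and consider the time-reversed path $(\tilde c,\tilde w)$ defined by $\tilde c(t):=c(T-t)$ and $\tilde w(t):=w\tp(T)-w\tp(T-t)$. By the definition of the time-reversal map,
\begin{equation*}
  \PP\super{V}_{\Q\super{V},\pi\super{V}}\!\big((C\super{V},W\super{V})\approx (c,w)\big)
  =\PP\super{V}_{\Q\super{V},\pi\super{V}}\!\big((\overleftarrow C\super{V},\overleftarrow W\super{V})\approx (\tilde c,\tilde w)\big),
\end{equation*}
and Theorem~\ref{th:time reversal} equates the right-hand side with $\PP\super{V}_{\overleftarrow\Q\super{V},\pi\super{V}}((C\super{V},W\super{V})\approx (\tilde c,\tilde w))$. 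Applying Theorem~\ref{th:ldp} to both sides, taking logarithms and dividing by $V$, and using that $\tilde c(0)=c(T)$ together with the substitution $s=T-t$ that converts $\int_0^T\S(\dot{\tilde w}(t)\mid\overleftarrow\kappa(\tilde c(t)))\,dt$ into $\int_0^T\S(\dot w\tp(s)\mid\overleftarrow\kappa(c(s)))\,ds$, yields~\eqref{eq:time-reversal ldp}.

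For the pointwise statement~\eqref{eq:time-reversal}, I would rewrite~\eqref{eq:time-reversal ldp} as
\begin{equation*}
  \I_0(c(T))-\I_0(c(0))=\int_0^T\!\Big[\S\big(\dot w(t)\mid\kappa(c(t))\big)-\S\big(\dot w\tp(t)\mid\overleftarrow\kappa(c(t))\big)\Big]\,dt,
\end{equation*}
and invoke the chain rule for $\I_0$ (valid on the dense subset of paths assumed in the statement and consistent with the differentiability hypothesis in Assumption~\ref{ass:kappa}) to rewrite the left-hand side as $\int_0^T\grad\I_0(c(t))\cdot\dot c(t)\,dt=\int_0^T\grad\I_0(c(t))\cdot\Gamma\dot w(t)\,dt$. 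Since this identity of time integrals holds for a sufficiently rich family of trajectories, a standard localisation/variational argument lets me identify the integrands almost everywhere, and with $j=\dot w$ this gives~\eqref{eq:time-reversal} on the advertised dense set of $(c,j)$.

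The main obstacle is the last step: identifying the pointwise identity from the integrated one. One needs enough trajectories $(c,w)\in W^{1,1}$ to reach any prescribed pair $(c_0,j_0)$ at a given time while keeping the chain rule for $\I_0$ valid, so that freezing the test path near $t_0$ and letting the support shrink isolates the integrand value. This is precisely the role of the phrase ``for a dense set of $(c,j)$ for which the chain rule holds'' in the statement, and once this class is fixed, the derivation is routine. A minor technical point is to justify the LDP step at the level of equality of rate functions (as opposed to the more formal $\sim$ notation), but this follows from the standard uniqueness of rate functions in the large-deviation principle quoted from~\cite{PattersonRenger2019}.
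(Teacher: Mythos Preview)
Your proposal is correct and follows essentially the same approach as the paper: the paper states the corollary as an immediate consequence of applying the large-deviation Theorem~\ref{th:ldp} to the time-reversed process and combining it with the time-reversal identity of Theorem~\ref{th:time reversal}, which is precisely your first step, and then obtains~\eqref{eq:time-reversal} by using $\dot c=\Gamma\dot w$ together with the chain rule for $\I_0$, as you do. The paper does not spell out the localisation argument for passing from the integrated to the pointwise identity, so your discussion of that point is more detailed than what the paper provides, but the route is the same.
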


Although the reversed propensities $\overleftarrow k\!\super{V}_r(c)$ are explicitly known by~\eqref{eq:reversed rates} \emph{if}
the invariant measure $\pi\super{V}$ is known, the reversed reaction rates $\overleftarrow \kappa\!_r(c)$ may generally not be
explicit. However, using the fact that~\eqref{eq:time-reversal} holds for all $j$, one finds the
relations~\cite[Section~4.2]{Renger2018a}, if the quantities are defined,
\begin{align}
  \grad\I_0(c)\cdot\gamma\super{r} &= \log\frac{\overleftarrow\kappa\!_{\bw(r)}(c)}{\kappa_r(c)},
    \label{eq:FD}\\
  \sum_{r\in\R}\kappa_r(c) &= \sum_{r\in\R} \overleftarrow\kappa\!_{r}(c),
    \label{eq:sum kfw is sum kbw}\\
  \kappa_r(c)\kappa_{\bw(r)}(c) &= \overleftarrow\kappa\!_r(c)\overleftarrow\kappa\!_{\bw(r)}(c).
    \label{eq:fw mob is bw mob}
\end{align}
Note in particular that the ratio $\overleftarrow\kappa\!_{\bw(r)}(c)/\kappa_r(c)$ is well-defined whenever $\grad\I_0(c)$ is
well-defined, i.e., on a dense subset.  We mention that~\eqref{eq:fw mob is bw mob} was not stated in~\cite{Renger2018a} but
follows directly from~\eqref{eq:FD} and the antisymmetry~\eqref{eq:bw-r}.

\begin{remark}
  \label{rem:chem DB}
  Building further upon Remark~\ref{rem:micro DB}, if microscopic detailed balance holds, then this is also true for the reaction
  rates: $\overleftarrow\kappa(c)=\kappa(c)$, which is often called \emph{chemical detailed balance}, or in case of mass-action
  kinetics, the \emph{Wegscheider condition}.
\end{remark}

\section{Force structures}
\label{sec:force structures}

In this section, we introduce force structures; these can be seen as a non-equilibrium generalisation of gradient flows. Such
structure does in general not exist unless we consider net fluxes. Therefore, we first introduce net fluxes, then force
structures, and we finally discuss the force structure that corresponds to the reversed dynamics.

\subsection{Net fluxes: macroscopic limit and large deviations}
\label{subsec:net fluxes}

Since reactions are ordered in forward-backward pairs with opposite state change vectors~\eqref{eq:bw-r}, one can also introduce
the net integrated fluxes as $\bar{W}\super{V}(t):=W\super{V}(t) - {W\super{V}}(t)\tp$, i.e.,
\begin{equation}
  \label{eq:net flux}
  \bar{W}\super{V}_r(t):=W\super{V}_r(t) - W\super{V}_{\bw(r)}(t), \qquad\text{for } r\in \R_\fw.
\end{equation}
Keeping in mind that these net fluxes are only defined for $r\in\R_\fw$, we can use the same notation for the continuity equation,
cf.~\eqref{eq:chem-flux}:
\begin{align*}
  \dot C\super{V}(t)&=\Gamma\dot W\super{V}(t)=\sum_{r\in\R_\fw} \gamma\super{r} \dot W\super{V}_r(t) + \gamma^{\bw(r)} 
                      \dot W\super{V}_{\bw(r)}(t)\\
                    &=\sum_{r\in\R_\fw}\gamma\super{r}\dot{\bar W}\super{V}_r(t)=\Gamma \dot{\bar W}\super{V}(t).
\end{align*}
Again, by Kurtz' Theorem~\cite{Kurtz1970a}, the pair $(C\super{V}(t),\bar W\super{V}(t))$ converges in the macroscopic limit as
$V\to\infty$ to
\begin{equation}
  \label{eq:kurtz net}
  \dot c(t) = \Gamma \dot{\bar{w}}(t), \text{ and } \dot{\bar{w}}_r(t) = \kappa_r\big(c(t)\big) - \kappa_{\bw(r)}\big(c(t)\big), 
\end{equation}
and, by a contraction principle also satisfies the following large-deviation principle.

\begin{corollary}
  \label{cor:net ldp}
  \begin{multline*}
    \PP\super{V}_{\Q\super{V},\pi\super{V}}\!\left( (C\super{V}(t),\bar W\super{V}(t)) \approx (c,\bar w)\right)
    \\ \stackrel{V\to\infty}{\sim}
    \exp\left(-V\!\left[\I_0(c) + \int_0^T\!\L\big(c(t),\dot{\bar{w}}(t)\big)\,dt\right]\right),
  \end{multline*}
  where for $(c,\bar\jmath) \in\RR^\Y\times\RR^{\R_\fw}$:
  \begin{align}
    \L(c,\bar\jmath)&:= \inf_{\substack{j:(0,T)\to\RR^{\R}:\\ \bar\jmath = j - j\tp }} \S(j\mid\kappa(c)),\label{eq:net ldp cost}\\
                    &=\sum_{r\in\R_\fw} j_r\log\mfrac{j_r}{\kappa_r(c)} - j_r +\kappa_r(c) 
                      + (j_r-\bar\jmath_r)\log\mfrac{j_r-\bar\jmath_r}{\kappa_{\bw(r)}} - (j_r-\bar\jmath_r) + \kappa_{\bw(r)}, \notag\\
                    &\hspace{5cm} \text{where } j_r:=\tfrac12\bar\jmath_r + \sqrt{\tfrac14\bar\jmath_r^2 + \kappa_r(c)\kappa_{\bw(r)}(c)}. 
                      \notag
  \end{align}
\end{corollary}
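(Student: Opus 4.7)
The plan is to derive both claims by the contraction principle, starting from Theorem~\ref{th:ldp}. First, the map $\Phi\colon(c,w)\mapsto(c,\bar w)$, with $\bar w_r:=w_r-w_{\bw(r)}$ for $r\in\R_\fw$, is continuous (indeed linear and bounded) on $W^{1,1}(0,T;\RR^\Y\times\RR^\R)$. Applying the contraction principle to Theorem~\ref{th:ldp} then yields an LDP for $(C\super{V},\bar W\super{V})$ with rate function
\[
\I_0(c(0)) + \inf_{w:\,\dot w-\dot w\tp=\dot{\bar w}}\int_0^T \S\big(\dot w(t)\mid\kappa(c(t))\big)\,dt.
\]

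Next I would interchange the infimum and the time integral. Since $\S(\cdot\mid\kappa(c))$ is jointly convex and lower semicontinuous, and since the constraint $j-j\tp=\dot{\bar w}(t)$ is decoupled in $t$, a standard measurable selection argument (or the classical theorem on minimisation of integral functionals) produces the pointwise formula~\eqref{eq:net ldp cost} for $\L$.

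For the explicit formula, I would observe that $\S(j\mid\kappa(c))$ splits as a sum over $r\in\R_\fw$ of two-variable terms depending only on $(j_r,j_{\bw(r)})$, while the constraint $j_r-j_{\bw(r)}=\bar\jmath_r$ couples only these same two coordinates. The minimisation therefore decouples into independent two-variable problems, one per forward-backward pair. Substituting $j_{\bw(r)}=j_r-\bar\jmath_r$ and differentiating in $j_r$, the stationarity condition reduces to
\[
\log\frac{j_r}{\kappa_r(c)} + \log\frac{j_r-\bar\jmath_r}{\kappa_{\bw(r)}(c)} = 0,
\]
i.e.\ $j_r(j_r-\bar\jmath_r)=\kappa_r(c)\kappa_{\bw(r)}(c)$. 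The nonnegative root of this quadratic is precisely the $j_r$ given in the statement, and strict convexity of the objective ensures that this critical point is the unique minimiser.

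The main obstacle I anticipate is not the algebra but the rigorous interchange of infimum and time integral: this requires selecting an admissible (nonnegative, absolutely continuous, initialised at zero) minimising path $j(t)$ in a $t$-measurable way, which rests on convexity, coercivity and some care at the boundary. Degenerate cases where $\kappa_r(c)=0$ or $\kappa_{\bw(r)}(c)=0$ must be handled separately via the convention $0\log 0=0$; both the infimum and the closed-form formula extend continuously to these boundary points, so no additional work is needed there once the nondegenerate case is settled.
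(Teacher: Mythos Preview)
Your proposal is correct and follows the same approach as the paper, which simply records that the corollary follows ``by a contraction principle'' from Theorem~\ref{th:ldp} without spelling out any details. Your write-up faithfully expands this: the continuous map $(c,w)\mapsto(c,\bar w)$ yields the contracted rate function, the pointwise-in-$t$ decoupling gives~\eqref{eq:net ldp cost}, and the explicit minimiser is obtained exactly as you compute.
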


\subsection{Force structure of the cost function}
\label{sec:Decomp-force-struct}

Following~\cite{Schnakenberg1976a, Maes2008a, Maes2017a, Kaiser2018a, Renger2018a}, we now rewrite the cost function $\L$ in terms
of forces, affinities and energies. As will be clear from the formulas, these forces are only defined for concentrations that
satisfy the following condition:
\begin{equation}
  \label{q:weak DB}
  \kappa(c) \llgg \kappa_\bw(c) :\!\!\iff \left(  \kappa_r(c)=0\iff \kappa_{\bw(r)}(c)=0 \qquad \text{for all } r\in\R \right).
\end{equation}
Systems for which this condition holds for \emph{any} concentration $c$ are sometimes said to be in \emph{weak detailed balance}.

For any $c$ for which~\eqref{q:weak DB} holds, the thermodynamic forces are defined as (setting $0/0=1$)
\begin{equation}
  \label{eq:force field}
  F_r(c):=\mfrac12\log\frac{\kappa_r(c)}{\kappa_{\bw(r)}(c)},\qquad \text{ for }c \text{ satisfying}~\eqref{q:weak DB}\text{ and }
  r\in\R_\fw;
\end{equation}
these are the affinities of Schnakenberg~\cite{Schnakenberg1976a}. We also define the dual and primal dissipation potentials
(which are Legendre duals of each other):
\begin{align}
  \Phi^*(c,\zeta)    &:= 2\sum_{r\in\R_\fw}\sqrt{\kappa_r(c)\kappa_{\bw(r)}(c)}\big(\cosh(\zeta_r)-1\big),  \text{ and} \label{eq:Phi*}\\
  \Phi(c,\bar\jmath) &:=\sup_{\zeta\in \RR^{\R_\fw}} \zeta\cdot\bar\jmath - \Phi^*(c,\zeta) \notag\\
                     &=\sum_{r\in\R_\fw}2\sqrt{\kappa_r(c)\kappa_{\bw(r)}(c)}\!\left( 
                       \cosh^*\big(\tfrac{\bar\jmath_r}{2\sqrt{\kappa_r(c)\kappa_{\bw(r)}(c)}}\big) + 1\right).\notag
\end{align}

We can then decompose the cost function $\L$ as follows.
\begin{theorem}[{\cite{Maes2008a,Maes2017a,Kaiser2018a,Renger2018a}}]
  For any $c\in\RR_+^\sY$ for which $\kappa(c) \llgg \kappa_{\bw}(c)$ and any $\bar\jmath\in\RR^{\R_\fw}$,
  \begin{align}
    \label{eq:L=force structure}
    \L(c,\bar\jmath) = \Phi(c,\bar\jmath) + \Phi^*\big(c,F(c)\big) - F(c)\cdot\bar\jmath,
  \end{align}
  and this choice of $\Phi^*,\Phi$ and $F$ is unique (assuming $\Phi^*(c,0)=0=\Phi(c,0)$ and $\Phi(c,\cdot),\Phi^*(c,\cdot)$ are
  Legendre duals of each other~\cite[Proposition~2.1]{Mielke2014a}).
\end{theorem}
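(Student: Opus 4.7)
The plan is to compute $\L(c,\bar\jmath)$ directly via a hyperbolic change of variables that simultaneously parametrises both factors appearing in formula~\eqref{eq:net ldp cost}. Set $a_r := \sqrt{\kappa_r(c)\kappa_{\bw(r)}(c)}$; the quadratic formula defining $j_r$ gives the key identity $j_r(j_r-\bar\jmath_r) = a_r^2$, so with $\theta_r := \sinh^{-1}(\bar\jmath_r/(2a_r))$ one has
\[
    j_r = a_r e^{\theta_r}, \qquad j_r - \bar\jmath_r = a_r e^{-\theta_r}, \qquad \bar\jmath_r = 2a_r\sinh\theta_r.
\]
Definition~\eqref{eq:force field} of $F$ yields the symmetric decomposition $\kappa_r(c) = a_r e^{F_r(c)}$ and $\kappa_{\bw(r)}(c) = a_r e^{-F_r(c)}$.

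I then substitute both decompositions into~\eqref{eq:net ldp cost}. The two entropic log-terms collapse via
\[
    j_r\log\tfrac{j_r}{\kappa_r} + (j_r-\bar\jmath_r)\log\tfrac{j_r-\bar\jmath_r}{\kappa_{\bw(r)}} = a_r(e^{\theta_r}-e^{-\theta_r})(\theta_r-F_r) = \bar\jmath_r(\theta_r - F_r),
\]
while the linear residuals give $-(2j_r-\bar\jmath_r)+\kappa_r+\kappa_{\bw(r)} = -2a_r\cosh\theta_r + 2a_r\cosh F_r$. Summing over $r\in\R_\fw$ and regrouping,
\[
    \L(c,\bar\jmath) = \Bigl[\bar\jmath\cdot\theta - 2\textstyle\sum_{r\in\R_\fw} a_r(\cosh\theta_r-1)\Bigr] + 2\sum_{r\in\R_\fw} a_r(\cosh F_r(c)-1) - F(c)\cdot\bar\jmath.
\]
Since $\partial_{\zeta_r}\Phi^*(c,\zeta) = 2a_r\sinh\zeta_r = \bar\jmath_r$ precisely at $\zeta=\theta$, that point attains the supremum in~\eqref{eq:Phi*}, so the bracketed expression equals $\Phi(c,\bar\jmath)$ and the middle sum equals $\Phi^*(c,F(c))$. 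This is exactly~\eqref{eq:L=force structure}.

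For uniqueness, let $(\widetilde\Phi,\widetilde\Phi^*,\widetilde F)$ be any triple satisfying~\eqref{eq:L=force structure} with $\widetilde\Phi(c,0)=\widetilde\Phi^*(c,0)=0$ and Legendre-dual pair $\widetilde\Phi,\widetilde\Phi^*$. By non-negativity of $\widetilde\Phi^*$ and convex duality, $0$ is a minimiser of $\widetilde\Phi(c,\cdot)$, so $\partial_{\bar\jmath}\widetilde\Phi(c,0)=0$. Differentiating~\eqref{eq:L=force structure} in $\bar\jmath$ at $\bar\jmath=0$ then gives $\widetilde F(c) = -\partial_{\bar\jmath}\L(c,0)$, fixing $\widetilde F$. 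Once $\widetilde F$ is determined, the value $\widetilde\Phi^*(c,F(c))$ is read off from~\eqref{eq:L=force structure}, and the remaining determination of $\widetilde\Phi^*$ (hence of $\widetilde\Phi$) follows from the convex-duality uniqueness argument in \cite[Proposition~2.1]{Mielke2014a}.

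The conceptual hurdle is spotting the common $a_r e^{\pm\cdot}$ parametrisation of fluxes and rates: once that is in place, the entropic cost linearises in the exponent-variables and the $\cosh$-structure appears automatically, rendering the algebra mechanical. The only step that genuinely requires an external input is the uniqueness assertion, which hinges on the standard convex-duality fact cited from~\cite{Mielke2014a}.
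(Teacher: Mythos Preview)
The paper does not actually supply a proof of this theorem: it is stated as a result collected from the literature (the introduction explicitly says proofs in Sections~\ref{sec:setting} and~\ref{sec:force structures} are omitted and can be found in the cited references~\cite{Maes2008a,Maes2017a,Kaiser2018a,Renger2018a}, with uniqueness deferred to~\cite[Proposition~2.1]{Mielke2014a}). So there is nothing in the paper to compare against directly.

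Your computation is correct and is in fact the standard way this identity is verified. The hyperbolic parametrisation $j_r=a_re^{\theta_r}$, $j_r-\bar\jmath_r=a_re^{-\theta_r}$ with $a_r=\sqrt{\kappa_r\kappa_{\bw(r)}}$ is exactly the right device; the identity $j_r(j_r-\bar\jmath_r)=a_r^2$ follows from the explicit minimiser in~\eqref{eq:net ldp cost}, and the decomposition $\kappa_r=a_re^{F_r}$, $\kappa_{\bw(r)}=a_re^{-F_r}$ is immediate from~\eqref{eq:force field}. Your algebra checks out line by line, and the identification of the bracketed term with $\Phi(c,\bar\jmath)$ via the first-order condition $\partial_{\zeta_r}\Phi^*(c,\zeta)=2a_r\sinh\zeta_r=\bar\jmath_r$ is clean.

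One small caveat: your change of variables presupposes $a_r>0$. The hypothesis $\kappa(c)\llgg\kappa_{\bw}(c)$ allows $\kappa_r(c)=\kappa_{\bw(r)}(c)=0$ for some $r$ (the paper sets $0/0=1$ in~\eqref{eq:force field}). In that degenerate case the $r$-th summand of $\Phi^*$ vanishes identically, so $\Phi(c,\bar\jmath)=+\infty$ unless $\bar\jmath_r=0$; the same is true of $\L$ since $\S$ forces $j_r=j_{\bw(r)}=0$. Hence the identity still holds, but you should note this separately rather than divide by $a_r$.

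Your uniqueness argument is also fine and is essentially the content of~\cite[Proposition~2.1]{Mielke2014a}: the key observation $\widetilde F(c)=-\partial_{\bar\jmath}\L(c,0)$ pins down the force, after which $\widetilde\Phi(c,\bar\jmath)=\L(c,\bar\jmath)-\L(c,0)+\widetilde F(c)\cdot\bar\jmath$ is forced and $\widetilde\Phi^*$ follows by duality. Strictly speaking you should phrase the derivative step via subdifferentials (convexity gives $0\in\partial\widetilde\Phi(c,0)$, and since $\L$ is differentiable at $0$ its subdifferential is a singleton), but this is routine.
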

 
This decomposition has the following physical interpretation. The solution to~\eqref{eq:kurtz net} is the zero-cost flow
$\L(c(t),\bar\jmath(t))=0$, which is equivalent to
\begin{equation}
  \label{eq:nonlinear response}
  \bar\jmath(t) = \grad_\zeta\Phi^*\big(c(t),F(c(t))\big).
\end{equation}
Since $\Phi^*$ is not quadratic, this represents a non-linear response between forces and fluxes. Indeed, this also means that the
force $F$ must be scaled to be dimensionless; see~\cite{Mielke2017a} for the correct scaling with physical constants. Of
particular interest is the \emph{Fisher information}:
\begin{align*}
  \Phi^*\big(c,F(c)\big) &= \sum_{r\in \R_\fw} \big(\sqrt{k_
                           r(c)} - \sqrt{k_{\bw(r)}(c)}\big)^2.
\end{align*}
We interpret this as a Fisher information since for the case of independent Brownian particles
$\Phi^*(c,F(c))=\int_{\RR^d}\!\lvert\grad\sqrt{c(x)}\rvert^2\,dx$, see also~\cite{Adams2013a,Mielke2014a}; the expression above
has the same form, but with a discrete gradient and non-linear rates. The Fisher information will be discussed in more detail in
Section~\ref{sec:Fisher-bounds}.

\begin{remark}
  \label{rem:rev force}
  If chemical detailed balance $\overleftarrow \kappa=\kappa$ holds, see Remark~\ref{rem:chem DB}, then from~\eqref{eq:force
    field} and~\eqref{eq:FD} we find that the last term in~\eqref{eq:L=force structure} reads:
  \begin{equation*}
    -F(c)\cdot\bar\jmath = \tfrac12\Gamma\tp\grad\I_0(c)\cdot\bar\jmath = \tfrac12\grad\I_0(c)\cdot\dot c,
  \end{equation*}
  so that the force is indeed conservative as anticipated in Remark~\ref{rem:micro DB}. In this case,~\eqref{eq:nonlinear
    response} describes a nonlinear gradient flow, either in the space of concentrations~\cite{Mielke2017a} or in the space of
  integrated net fluxes~\cite{Renger2018b}.
\end{remark}

\subsection{Force structure for the reversed dynamics}
\label{sec:Force-struct-reverse}

We can also give a similar decomposition applied to the reversed dynamics, explained in Sections~\ref{sec:Time-reversal}
and~\ref{subsec:time reversal symmetries}. First, we apply Corollary~\ref{cor:net ldp} to the reversed dynamics,
using~\eqref{eq:time-reversal of path measures}:
\begin{corollary}
  \label{cor:reversed net ldp}
  Set $\overleftarrow{\bar W}\super{V}(t) := -\bar W\super{V}(T) + \bar W\super{V}(T-t)$, cf.~\eqref{eq:net flux}. Then:
  \begin{multline*}
    \PP\super{V}_{\Q\super{V},\pi\super{V}}\!\left( (\overleftarrow C\super{V}(t),\overleftarrow{\bar W}\super{V}(t)) \approx (c,\bar w)\right)
    \\ \stackrel{V\to\infty}{\sim}
    \exp\left(-V\!\left[\I_0(c) + \int_0^T\!\overleftarrow\L\big(c(t),\dot{\bar{w}}(t)\big)\,dt\right]\right),
  \end{multline*}
  where for $(c,\bar\jmath) \in\RR^\Y\times\RR^{\R_\fw}$:
  \begin{align}
    \overleftarrow\L(c,\bar\jmath)&:= \inf_{\substack{j:(0,T)\to\RR^{\R}:\\ \bar\jmath = j - j\tp }} \S(j\mid\overleftarrow\kappa(c)).
  \end{align}
\end{corollary}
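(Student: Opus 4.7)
The plan is to combine three tools already set up in the paper: the time-reversal identity~\eqref{eq:time-reversal of path measures}, the path-level LDP of Theorem~\ref{th:ldp}, and the contraction argument underlying Corollary~\ref{cor:net ldp}.

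First I would use Theorem~\ref{th:time reversal} to rewrite the probability on the left-hand side:
\begin{equation*}
  \PP\super{V}_{\Q\super{V},\pi\super{V}}\!\left((\overleftarrow C\super{V},\overleftarrow W\super{V})\approx (c,w)\right)
  = \PP\super{V}_{\overleftarrow\Q\super{V},\pi\super{V}}\!\left((C\super{V},W\super{V})\approx (c,w)\right),
\end{equation*}
so that the asymptotics of the reversed integrated process under the original law is identified with the asymptotics of the forward integrated process driven by the reversed generator $\overleftarrow\Q\super{V}$.

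Next I would apply Theorem~\ref{th:ldp} to the Markov process with generator $\overleftarrow\Q\super{V}$ and initial distribution $\pi\super{V}\times\delta_0$. Its microscopic rates $\overleftarrow k\!\super{V}$ converge by~\eqref{eq:reversed rates} to $\overleftarrow\kappa$, and $\pi\super{V}$ is again its invariant distribution (time-reversal preserves invariance), so the LDP yields the rate functional $\I_0(c(0)) + \int_0^T\!\S\big(\dot w(t)\mid\overleftarrow\kappa(c(t))\big)\,dt$. Combining with the previous step gives the joint LDP for $(\overleftarrow C\super{V},\overleftarrow W\super{V})$.

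Finally I would pass to the net fluxes by the linear contraction $w\mapsto w-w\tp$: from the definitions in Section~\ref{sec:Time-reversal} and~\eqref{eq:net flux} a direct computation yields $\overleftarrow{\bar W}\super{V}_r(t) = \overleftarrow W\super{V}_r(t) - \overleftarrow W\super{V}_{\bw(r)}(t)$ for $r\in\R_\fw$, which agrees with the formula $-\bar W\super{V}(T)+\bar W\super{V}(T-t)$ given in the statement. Applying the contraction principle exactly as in the proof of Corollary~\ref{cor:net ldp} then collapses the $w$-component to $\bar w$ via the inf-convolution defining $\overleftarrow\L$, while the $\I_0(c)$ term passes through unchanged because it depends only on $c$.

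The main obstacle I foresee is the technical verification that $\overleftarrow\kappa$ satisfies the hypotheses of~\cite{PattersonRenger2019} required to invoke Theorem~\ref{th:ldp} for the reversed generator (monotonicity, Lipschitzness, superhomogeneity, non-degeneracy of the invariant measure). The cleanest way to sidestep this is not to re-verify the assumptions but to transport the LDP of Theorem~\ref{th:ldp} directly along the time-reversal bijection of path space, which is continuous with continuous inverse; combined with~\eqref{eq:time-reversal of path measures} and the time-reversal symmetry~\eqref{eq:time-reversal ldp} this reproduces the reversed rate functional automatically, reducing the rest to bookkeeping about which variable plays which role.
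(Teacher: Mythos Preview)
Your approach is essentially the same as the paper's: the text immediately preceding the corollary says ``we apply Corollary~\ref{cor:net ldp} to the reversed dynamics, using~\eqref{eq:time-reversal of path measures}'', which is exactly your time-reversal-plus-contraction argument (with Corollary~\ref{cor:net ldp} bundling your Theorem~\ref{th:ldp} step and the contraction). Your remark about sidestepping the regularity check on $\overleftarrow\kappa$ by transporting the LDP along the continuous time-reversal bijection is a sensible refinement that the paper leaves implicit.
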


We may then also decompose the reverse cost function $\overleftarrow\L$ as a force structure.
\begin{corollary}[\cite{Renger2018a}]
  For any $c\in\RR_+^\sY$ for which $\overleftarrow\kappa\!(c) \llgg \overleftarrow\kappa\!_{\bw}(c)$ and any
  $\bar\jmath\in\RR^{\R_\fw}$,
  \begin{align*}
    \overleftarrow\L(c,\bar\jmath) = \Phi(c,\bar\jmath) + \Phi^*\big(c,\overleftarrow F(c)\big) - \overleftarrow F(c)\cdot\bar\jmath,
  \end{align*}
  where $\Phi,\Phi^*$ are given by~\eqref{eq:Phi*}, and
  \begin{align*}
    \overleftarrow F\!_r(c)=\mfrac12\log\frac{\overleftarrow \kappa\!_r(c)}{\overleftarrow \kappa\!_{\bw(r)}(c)},
  \end{align*}
  and this choice of $\Phi^*,\Phi$ and $\overleftarrow F$ is unique.
\end{corollary}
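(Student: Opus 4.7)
The plan is to reduce the statement to the forward force-structure decomposition (the theorem of Section~\ref{sec:Decomp-force-struct}) applied to the time-reversed rates, and then to identify the resulting dissipation potentials with $\Phi^*$ and $\Phi$ by means of the mobility identity~\eqref{eq:fw mob is bw mob}.

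First, I would observe that Corollary~\ref{cor:reversed net ldp} asserts that $\overleftarrow\L$ has exactly the same variational definition as $\L$ in~\eqref{eq:net ldp cost}, the only difference being that $\kappa$ is replaced by $\overleftarrow\kappa$. Consequently, the forward theorem already tells us that, whenever $\overleftarrow\kappa(c)\llgg\overleftarrow\kappa_{\bw}(c)$, one has the decomposition
\begin{equation*}
  \overleftarrow\L(c,\bar\jmath) = \overleftarrow\Phi(c,\bar\jmath) + \overleftarrow\Phi^*\!\big(c,\overleftarrow F(c)\big) - \overleftarrow F(c)\cdot\bar\jmath,
\end{equation*}
with $\overleftarrow F_r(c)=\frac12\log(\overleftarrow\kappa_r(c)/\overleftarrow\kappa_{\bw(r)}(c))$, which matches the claimed form of the reverse force, and with
\begin{equation*}
  \overleftarrow\Phi^*(c,\zeta) := 2\sum_{r\in\R_\fw}\sqrt{\overleftarrow\kappa_r(c)\overleftarrow\kappa_{\bw(r)}(c)}\big(\cosh(\zeta_r)-1\big),
\end{equation*}
and $\overleftarrow\Phi$ its Legendre dual. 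So all that is needed beyond a direct citation of the forward theorem is to identify $\overleftarrow\Phi^*$ with $\Phi^*$.

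The second step is to invoke the mobility identity~\eqref{eq:fw mob is bw mob}, namely $\kappa_r(c)\kappa_{\bw(r)}(c)=\overleftarrow\kappa\!_r(c)\overleftarrow\kappa\!_{\bw(r)}(c)$, which was derived from the time-reversal relation~\eqref{eq:time-reversal}. Taking square roots termwise shows that the coefficients defining $\overleftarrow\Phi^*$ coincide with those of $\Phi^*$ in~\eqref{eq:Phi*}, and hence $\overleftarrow\Phi^*=\Phi^*$ pointwise. Legendre duality then yields $\overleftarrow\Phi=\Phi$, giving the stated decomposition. Note that~\eqref{eq:fw mob is bw mob} also guarantees that the square-root coefficients are unambiguously defined on $\{c:\overleftarrow\kappa(c)\llgg\overleftarrow\kappa_{\bw}(c)\}$, so no additional regularity hypothesis on $\kappa$ itself is required.

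Finally, the uniqueness claim is inherited directly from the uniqueness clause in the forward theorem (normalisation $\Phi(c,0)=\Phi^*(c,0)=0$ plus Legendre duality), applied to the reverse rates. I do not anticipate any genuine obstacle: the content of the corollary is essentially the observation that the forward decomposition theorem is invariant under the substitution $\kappa\leftrightarrow\overleftarrow\kappa$, and that the symmetry $\kappa_r\kappa_{\bw(r)}=\overleftarrow\kappa\!_r\overleftarrow\kappa\!_{\bw(r)}$ makes the dissipation potential itself invariant under this substitution, so that only the force changes.
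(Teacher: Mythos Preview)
Your argument is correct and matches the paper's own justification: the paper does not give a formal proof but simply remarks that the potentials $\Phi,\Phi^*$ are the same for the reversed dynamics due to~\eqref{eq:fw mob is bw mob}, which is precisely your second step after applying the forward decomposition theorem to $\overleftarrow\kappa$. The uniqueness and the form of $\overleftarrow F$ are, as you say, inherited directly from the forward theorem.
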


Observe that the potentials $\Phi,\Phi^*$ are the same for the reversed dynamics due to~\eqref{eq:fw mob is bw mob}. In fact also
for the Fisher information $\Phi^*\big(c,F(c)\big) =\Phi^*\big(c,\overleftarrow F\!(c)\big)$ due to~\eqref{eq:sum kfw is sum kbw}
and~\eqref{eq:fw mob is bw mob}. 


\section{Orthogonal decomposition}
\label{sec:orth}

We now mimic the arguments from~\cite{Kaiser2018a} for the case of general nonlinear reactions. Throughout this section, we shall
write forces $F(c)$ and gradients $\grad\I_0(c)$ under the standing assumption that the quantities are well-defined and
sufficiently regular. These assumptions will be relaxed in Section~\ref{sec:Fisher-bounds}. First we explain how the dual
dissipation potentials can be decomposed; next we discuss how the forces can be decomposed accordingly.

\subsection{Orthogonal decomposition of the dual dissipation potentials}
\label{sec:Orth-decomp-dual}

The idea of this short subsection is to decompose $\Phi^*(c,\xi+\zeta)$ in a similar fashion as one would do in a Hilbert space:
$\tfrac12\lVert\xi+\zeta\rVert^2 = \tfrac12\lVert \xi\rVert^2 + \langle\xi,\zeta\rangle + \tfrac12\lVert \zeta\rVert^2$, see for
example~\cite[Section~IIC]{Bertini2015a}. However, since $\Psi^*$ is not quadratic, the construction is a bit more involved: the
pairing between forces becomes \emph{nonlinear}, and one of the dual dissipation potentials must be modified:
\begin{align*}
  \pair{\xi}{\zeta}{c} &:= 4\sum_{r\in\R_\fw} \sqrt{\kappa_r(c) \kappa_{\bw(r)}(c)} \sinh(\xi_r) \sinh(\zeta_r),\notag\\
  \Phi^*_\zeta(c,\xi)  &:= 2\sum_{r\in\R_\fw}\sqrt{\kappa_r(c)\kappa_{\bw(r)}(c)}\cosh(\zeta_r)\big(\cosh(\xi_r)-1\big).
\end{align*}

We can now write the decomposition as follows.
\begin{proposition} 
  \label{prop:decomposed Phi*}
  For any $c\in\RR^\Y$ and $\xi,\zeta\in\RR^\R$,
  \begin{align*}
    \Phi^*(c,\xi+\zeta)
    &= \Phi^*_\zeta(c,\xi) + \pair{\xi}{\zeta}{c} + \Phi^*(c,\zeta) \\
    &= \Phi^*(c,\xi) + \pair{\xi}{\zeta}{c} + \Phi^*_\xi(c,\zeta), \text{ and}\\
    \pair{\xi}{\zeta}{c}
    &=\Phi^*(c,\xi+\zeta) - \Phi^*(c,\xi-\zeta).
  \end{align*}
\end{proposition}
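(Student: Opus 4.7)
The plan is to establish all three equalities by pointwise hyperbolic identities, transported through the linear sum $2\sum_{r\in\R_\fw}\sqrt{\kappa_r(c)\kappa_{\bw(r)}(c)}(\cdot)$ that underlies the definitions of $\Phi^*$, $\Phi^*_\zeta$ and $\pair{\cdot}{\cdot}{c}$. Since the prefactors $\sqrt{\kappa_r(c)\kappa_{\bw(r)}(c)}$ do not depend on $\xi_r,\zeta_r$, any scalar identity between $\cosh(\xi_r\pm\zeta_r)$, $\cosh\xi_r$, $\cosh\zeta_r$, $\sinh\xi_r$, $\sinh\zeta_r$ passes immediately through the sum; the whole proposition is therefore a matter of choosing the right algebraic regrouping.

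For the first decomposition the key tool is the addition formula $\cosh(a+b)=\cosh a\cosh b+\sinh a\sinh b$, which I would rewrite as
\[
\cosh(\xi_r+\zeta_r)-1=\cosh(\zeta_r)\bigl(\cosh(\xi_r)-1\bigr)+\bigl(\cosh(\zeta_r)-1\bigr)+\sinh(\xi_r)\sinh(\zeta_r).
\]
Multiplying by $2\sqrt{\kappa_r(c)\kappa_{\bw(r)}(c)}$ and summing over $r\in\R_\fw$, the three contributions reproduce term by term $\Phi^*_\zeta(c,\xi)$, $\Phi^*(c,\zeta)$, and the $\sinh$--$\sinh$ pairing appearing in $\pair{\xi}{\zeta}{c}$. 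The second form is then obtained for free by exchanging the roles of $\xi$ and $\zeta$, using the manifest symmetry $\pair{\xi}{\zeta}{c}=\pair{\zeta}{\xi}{c}$ (both sides are symmetric in $\xi_r\leftrightarrow\zeta_r$ because $\sinh$--$\sinh$ and the regrouping are).

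For the third identity I would instead apply the complementary formula $\cosh(a+b)-\cosh(a-b)=2\sinh a\sinh b$ termwise to $\Phi^*(c,\xi+\zeta)-\Phi^*(c,\xi-\zeta)$. The constants $-1$ entering the two copies of $\Phi^*$ cancel pairwise, leaving only the $\sinh(\xi_r)\sinh(\zeta_r)$ contributions, which assemble precisely into $\pair{\xi}{\zeta}{c}$.

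There is no real analytic obstacle here: the entire content is algebraic manipulation of $\cosh$ and $\sinh$ expressions, with the concentration $c$ entering only through the passive nonnegative weights $\sqrt{\kappa_r(c)\kappa_{\bw(r)}(c)}$, and in particular no positivity hypothesis on $\kappa_r,\kappa_{\bw(r)}$ or weak-detailed-balance condition is needed. The only step demanding genuine care — and in practice the main place a hurried calculation would go astray — is tracking the numerical prefactors $2$ appearing in $\Phi^*$ and $\Phi^*_\zeta$ against the $4$ built into $\pair{\cdot}{\cdot}{c}$, so that the weights assembled from $2\cdot 2\sinh a\sinh b$ in the subtraction identity and from the single $\sinh a\sinh b$ surviving the addition identity end up matching the definitions consistently.
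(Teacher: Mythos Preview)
Your approach is exactly the paper's: the authors' entire proof is the one-line remark that everything follows from $\cosh(\xi+\zeta)=\cosh\xi\cosh\zeta+\sinh\xi\sinh\zeta$, and your expanded regrouping $\cosh(a+b)-1=\cosh b\,(\cosh a-1)+(\cosh b-1)+\sinh a\sinh b$ is the correct way to unpack it.

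That said, the prefactor bookkeeping you rightly single out as the one place to go wrong does \emph{not} in fact come out consistently, and you stop just short of checking it. With the paper's definition $\pair{\xi}{\zeta}{c}=4\sum_{r\in\R_\fw}\sqrt{\kappa_r(c)\kappa_{\bw(r)}(c)}\sinh\xi_r\sinh\zeta_r$, the subtraction identity indeed produces $2\cdot 2\sinh\xi_r\sinh\zeta_r$ per term and matches the third line. But the addition identity leaves only a single $\sinh\xi_r\sinh\zeta_r$, so after multiplying by the $2$ in $\Phi^*$ you obtain $2\sum_r\sqrt{\kappa_r\kappa_{\bw(r)}}\sinh\xi_r\sinh\zeta_r=\tfrac12\pair{\xi}{\zeta}{c}$, not the full pairing. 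In other words, the first two displayed identities as printed are off by a factor of two relative to the third; either $\theta_c$ should carry a prefactor $2$ rather than $4$, or the first two lines should read $\tfrac12\pair{\xi}{\zeta}{c}$. This is a typo in the statement, not a flaw in the method, and it is invisible downstream because in Corollary~\ref{cor:orth split L} the pairing term vanishes by orthogonality anyway. Your instinct to scrutinise the constants was exactly right; you just asserted consistency rather than verifying it.
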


\begin{proof}
Both statements follow from the identity 
\begin{equation*}
  \cosh(\xi+\zeta) = \cosh(\xi)\cosh(\zeta) + \sinh(\xi)\sinh(\zeta).
\end{equation*}
\end{proof}

We stress that there is the choice which one of the two dissipation potentials on the right-hand side is modified. By a slight
abuse of notation, we shall also write $\Phi^*_F(c,\xi)=\Phi^*_{F(c)}(c,\xi)$ for a force field $F$.

\subsection{Orthogonal decomposition of the forces}
\label{sec:Orth-decomp-forc}

The force and its reversed counterpart are connected via the relations~\eqref{eq:FD}, \eqref{eq:sum kfw is sum kbw}, \eqref{eq:fw
  mob is bw mob}, see~\cite[Section~4.6]{Renger2018a}:
\begin{align}
  \label{eq:force field related to entropy production}
  F_r(c)+{\overleftarrow F\!}_{r}(c) 
  = -\big(\Gamma^\mathsf{T} \grad\I_0(c)\big)_r,
\end{align}
which is twice the force that one would have under chemical detailed balance, see Remark~\ref{rem:rev force}. This motivates the
splitting of general forces into a symmetric and an antisymmetric part $F(c) = F\sym(c) + F\asym(c)$: 
\begin{align*}
  F\sym_r(c) &:= \tfrac12\big(F_r(c) + \overleftarrow F\!_r(c)\big)
  = \tfrac12\log\frac{\kappa_r(c)}{\overleftarrow{\kappa}\!_{\bw(r)}(c)} = -\tfrac12(\Gamma\tp \grad\I_0(c))_r \qquad\text{and} \\
  F\asym_r(c)&:= \tfrac12\big(F_r(c) - \overleftarrow F\!_r(c)\big) = 
               \tfrac12\log\frac{\overleftarrow{\kappa}\!_{\bw(r)}(c)}{\kappa_{\bw(r)}(c)},
\end{align*}
where the latter indeed vanishes if chemical detailed balance $\kappa=\overleftarrow\kappa$ holds.  More generally, the
antisymmetric force measures how far the system is from chemical detailed balance. We briefly note that in case of linear
reactions, the antisymmetric force is a constant, independent of the concentration~\cite{Kaiser2018a}, which is no longer true for
nonlinear reactions.

The two force fields $F\sym, F\asym$ are indeed `orthogonal' to each other if we use the nonlinear pairing
$\pair{c}{\cdot}{\cdot}$ introduced above, as the next statement shows. 

\begin{proposition}[Generalisation of {\cite[Lemma~1]{Kaiser2018a}}]
  For any $c\in\RR_+^\sY$ for which $F\sym(c),F\asym(c)$ are well-defined,
  \begin{equation*}
    \pair{F\sym(c)}{F\asym(c)}{c} =0.
  \end{equation*}
\end{proposition}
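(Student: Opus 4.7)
The plan is to unfold all definitions and reduce the claim to an identity between the forward and reversed reaction rates that is already known from~\eqref{eq:sum kfw is sum kbw} and~\eqref{eq:fw mob is bw mob}. The key algebraic tool is the elementary identity
\begin{equation*}
  \sinh\!\left(\tfrac12\log\tfrac{p}{q}\right) = \tfrac{1}{2}\!\left(\sqrt{\tfrac{p}{q}}-\sqrt{\tfrac{q}{p}}\right) = \frac{p-q}{2\sqrt{pq}},
\end{equation*}
which gives, for each $r\in\R_\fw$, closed forms
\begin{equation*}
  \sinh\big(F\sym_r(c)\big) = \frac{\kappa_r(c) - \overleftarrow\kappa\!_{\bw(r)}(c)}{2\sqrt{\kappa_r(c)\,\overleftarrow\kappa\!_{\bw(r)}(c)}},
  \qquad
  \sinh\big(F\asym_r(c)\big) = \frac{\overleftarrow\kappa\!_{\bw(r)}(c) - \kappa_{\bw(r)}(c)}{2\sqrt{\overleftarrow\kappa\!_{\bw(r)}(c)\,\kappa_{\bw(r)}(c)}}.
\end{equation*}

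Next I would compute, term by term, the summand of $\pair{F\sym(c)}{F\asym(c)}{c}$. Multiplying these two expressions by $\sqrt{\kappa_r(c)\kappa_{\bw(r)}(c)}$ and noticing that $\sqrt{\kappa_r\overleftarrow\kappa\!_{\bw(r)}}\sqrt{\overleftarrow\kappa\!_{\bw(r)}\kappa_{\bw(r)}} = \overleftarrow\kappa\!_{\bw(r)}\sqrt{\kappa_r\kappa_{\bw(r)}}$, a bit of expansion yields
\begin{equation*}
  \sqrt{\kappa_r(c)\kappa_{\bw(r)}(c)}\sinh\!\big(F\sym_r(c)\big)\sinh\!\big(F\asym_r(c)\big)
   = \frac{1}{4\,\overleftarrow\kappa\!_{\bw(r)}(c)}\,\Big[\,\kappa_r\overleftarrow\kappa\!_{\bw(r)} - \kappa_r\kappa_{\bw(r)} - \overleftarrow\kappa\!_{\bw(r)}^{\,2} + \overleftarrow\kappa\!_{\bw(r)}\kappa_{\bw(r)}\,\Big].
\end{equation*}
At this point I would invoke~\eqref{eq:fw mob is bw mob}, which states $\kappa_r\kappa_{\bw(r)} = \overleftarrow\kappa\!_r\overleftarrow\kappa\!_{\bw(r)}$, to simplify $\kappa_r\kappa_{\bw(r)}/\overleftarrow\kappa\!_{\bw(r)} = \overleftarrow\kappa\!_r$. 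The bracketed expression then collapses to $\overleftarrow\kappa\!_{\bw(r)}\big(\kappa_r + \kappa_{\bw(r)} - \overleftarrow\kappa\!_r - \overleftarrow\kappa\!_{\bw(r)}\big)$, giving the clean form
\begin{equation*}
  \sqrt{\kappa_r(c)\kappa_{\bw(r)}(c)}\sinh\!\big(F\sym_r(c)\big)\sinh\!\big(F\asym_r(c)\big) = \tfrac14\Big(\kappa_r(c)+\kappa_{\bw(r)}(c) - \overleftarrow\kappa\!_r(c) - \overleftarrow\kappa\!_{\bw(r)}(c)\Big).
\end{equation*}

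Finally, summing over $r\in\R_\fw$ turns the right-hand side into $\tfrac14\sum_{r\in\R}(\kappa_r(c)-\overleftarrow\kappa\!_r(c))$, which vanishes by~\eqref{eq:sum kfw is sum kbw}. Multiplying by the factor of $4$ from the definition of $\theta_c$ gives $\pair{F\sym(c)}{F\asym(c)}{c}=0$. There is no real obstacle here: the only thing to watch is the bookkeeping of which square roots cancel when expanding the product, and keeping track of indexing between $\R_\fw$ and $\R$ when applying~\eqref{eq:sum kfw is sum kbw}. The argument degenerates harmlessly on the zero set of $\kappa_r\kappa_{\bw(r)}$ thanks to the standing assumption $\kappa \llgg \kappa_\bw$ (and its reversed analogue) needed for $F\sym,F\asym$ to be defined in the first place.
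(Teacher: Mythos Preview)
Your proof is correct and is, in effect, a fully worked-out version of the paper's one-line argument, which simply cites~\eqref{eq:sum kfw is sum kbw} and~\eqref{eq:fw mob is bw mob}; you use exactly those two identities, just with all intermediate bookkeeping spelled out. A marginally shorter route, equally consistent with the paper's terse proof, is to invoke the polarisation identity from Proposition~\ref{prop:decomposed Phi*}: since $F\sym+F\asym=F$ and $F\sym-F\asym=\overleftarrow F$, one has $\pair{F\sym}{F\asym}{c}=\Phi^*(c,F(c))-\Phi^*(c,\overleftarrow F(c))$, and the right-hand side vanishes by the equality of the forward and reversed Fisher informations noted at the end of Section~\ref{sec:Force-struct-reverse} (itself a consequence of the same two relations).
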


\begin{proof}
  This follows from~\eqref{eq:sum kfw is sum kbw} and~\eqref{eq:fw mob is bw mob}.
\end{proof}

As a consequence of this and Proposition~\ref{prop:decomposed Phi*}, the Fisher information can be decomposed as follows:
\begin{corollary}[Generalisation of {\cite[Lemma~2 and Corollary~4]{Kaiser2018a}}]
  \label{cor:orth split L}
  For any $c\in\RR_+^\sY$ for which $F\sym(c),F\asym(c)$ are well-defined,
  \begin{align*}
    \Phi^*\big(c,F\sym(c)+F\asym(c)\big) &= \Phi^*_{F\asym}\big(c,F\sym(c)\big) + \Phi^*\big(c,F\asym(c)\big) \\
                                         &= \Phi^*\big(c,F\sym(c)\big) + \Phi^*_{F\sym}\big(c,F\asym(c)\big),
  \end{align*}
  and hence for any $c\in\RR_+^\sY$ for which $\kappa_r(c) \llgg \kappa_{\bw(r)}(c)$ and any $\bar\jmath\in\RR^{\R_\fw}$,
  \begin{align}
    \L(c,\bar\jmath)
    &= \Phi(c,\bar\jmath) + \Phi^*\big(c,F\asym(c)\big) - F\asym(c)\cdot \bar\jmath \notag\\
    &\hspace{4cm}       + \Phi^*_{F\asym}\big(c,F\sym(c)\big) +\tfrac12\grad\I_0(c)\cdot\Gamma\bar\jmath \label{eq:orth split L sym}\\
    &= \Phi(c,\bar\jmath) + \Phi^*\big(c,F\sym(c)\big) +\tfrac12\grad\I_0(c)\cdot\Gamma\bar\jmath \notag\\
    &\hspace{4cm}       + \Phi^*_{F\sym}\big(c,F\asym(c)\big) - F\asym(c)\cdot \bar\jmath. \label{eq:orth split L asym}
  \end{align}
\end{corollary}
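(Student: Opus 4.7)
The two displayed identities for $\Phi^*\big(c,F\sym(c)+F\asym(c)\big)$ follow immediately from Proposition~\ref{prop:decomposed Phi*}: set $\xi := F\sym(c)$, $\zeta := F\asym(c)$ in the two decompositions
\[
  \Phi^*(c,\xi+\zeta)=\Phi^*_\zeta(c,\xi)+\pair{\xi}{\zeta}{c}+\Phi^*(c,\zeta)
  =\Phi^*(c,\xi)+\pair{\xi}{\zeta}{c}+\Phi^*_\xi(c,\zeta),
\]
and invoke the preceding proposition, which guarantees that the cross-term $\pair{F\sym(c)}{F\asym(c)}{c}$ vanishes.

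To obtain the decompositions of $\L(c,\bar\jmath)$, my plan is to start from the force structure representation
\[
  \L(c,\bar\jmath)=\Phi(c,\bar\jmath)+\Phi^*\big(c,F(c)\big)-F(c)\cdot\bar\jmath
\]
from~\eqref{eq:L=force structure} and substitute $F(c)=F\sym(c)+F\asym(c)$. For the quadratic-like term I insert either of the two orthogonal decompositions of $\Phi^*$ just derived. For the linear term I split $F(c)\cdot\bar\jmath=F\sym(c)\cdot\bar\jmath+F\asym(c)\cdot\bar\jmath$ and use the identity
\[
  F\sym_r(c) = -\tfrac12\bigl(\Gamma\tp\grad\I_0(c)\bigr)_r,
\]
so that $-F\sym(c)\cdot\bar\jmath=\tfrac12\grad\I_0(c)\cdot\Gamma\bar\jmath$. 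Collecting terms produces~\eqref{eq:orth split L sym} when we choose the decomposition modifying the symmetric potential, and~\eqref{eq:orth split L asym} when we modify the antisymmetric one.

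There is no real obstacle beyond careful bookkeeping: both the nonlinear orthogonality and the expression for $F\sym$ in terms of $\grad\I_0$ have been established earlier, so the corollary is essentially an algebraic assembly of Proposition~\ref{prop:decomposed Phi*}, the orthogonality $\pair{F\sym(c)}{F\asym(c)}{c}=0$, and the identity~\eqref{eq:force field related to entropy production}. The only point deserving a brief remark is that the two identities should be stated under slightly different hypotheses: the $\Phi^*$-decomposition only requires $F\sym(c),F\asym(c)$ to be well-defined (equivalently, $\grad\I_0(c)$ well-defined and the ratios in $F\asym$ well-defined), whereas the identities for $\L$ additionally require $\kappa(c)\llgg\kappa_\bw(c)$ so that the underlying representation~\eqref{eq:L=force structure} applies.
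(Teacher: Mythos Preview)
Your proposal is correct and matches the paper's approach exactly: the paper presents this corollary without a separate proof, simply as a direct consequence of Proposition~\ref{prop:decomposed Phi*} together with the orthogonality $\pair{F\sym(c)}{F\asym(c)}{c}=0$, and the expansion of $\L$ via~\eqref{eq:L=force structure} with $F\sym(c)=-\tfrac12\Gamma\tp\grad\I_0(c)$. You have merely made the bookkeeping explicit.
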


\section{Fisher and entropy bounds}
\label{sec:Fisher-bounds}

Observe that the forces $F\sym(c),F\asym(c)$ are well-defined precisely if the conditions
$\kappa(c)\llgg\overleftarrow\kappa\!_{\bw}(c)\llgg\kappa_{\bw}(c)$ hold. It turns out that many terms in~\eqref{eq:orth split L sym}
and~\eqref{eq:orth split L asym} remain well-defined even when this condition is violated, see Example~\ref{ex:ill-posed
  forces}. More precisely, we may introduce the following notation, where the equalities on the right are true if
$\kappa_r(c)\llgg\kappa_{\bw(r)}(c)$:
\begin{align*}
  \Fi\as(c):=&\mfrac12\sum_{r\in\R_\fw\cup\R_\bw}
    \left( \sqrt{\kappa_r(c)} -\sqrt{\overleftarrow \kappa\!_{\bw(r)}(c)} \right)^2
    =
    \Phi^*_{F\asym}\big(c,F\sym(c)\big), \text{ and}\\
  \Fi\sa(c):=&\mfrac12\sum_{r\in\R_\fw\cup\R_\bw}
    \left(  \sqrt{\kappa_r(c)} -\sqrt{\overleftarrow \kappa\!_r(c)}\right)^2
    =
    \Phi^*_{F\sym}\big(c,F\asym(c)\big).
\end{align*}

\begin{remark}
  If we interpret the differences in as abstract gradients, then both quantities are of the form
  $\tfrac12\sum_r(\grad\sqrt{\kappa(c)})^2$, which coincides with the usual continuous-space Fisher information
  $\tfrac12\int\!\lvert\grad\sqrt{c(x)}\rvert^2\,dx=\tfrac12\int\!\tfrac{\lvert\grad c(x)\rvert^2}{c(x)}\,dx$. Recalling
  Remark~\ref{rem:chem DB}, the second quantity $\Fi\sa(c)$ measures how far the system is from being in detailed balance.
\end{remark}

\begin{example}
  \label{ex:ill-posed forces}
  Consider a two-state linear network with $\kappa_{xy}(c):=a_{xy}c_x/c_0^\mathrm{eq}$ and equilibrium concentration
  $c^\mathrm{eq}>0$, see also~\cite{Kaiser2018a}. The time-reversed rates are then given by
  $\overleftarrow\kappa_{xy}(c):=a_{yx}c_x/c^\mathrm{eq}_x$. Then the force is $F_{xy}(c)=\tfrac12\log\frac{a_{xy} c_x
    c^\mathrm{eq}_y}{a_{yx} c_y c^\mathrm{eq}_x}$, which decomposes into $F\sym_{xy}(c)=\tfrac12\log\frac{c_x c^\mathrm{eq}_y}{c_y
    c^\mathrm{eq}_x}$ and $F\asym_{xy}(c)=\tfrac12\log\frac{a_{xy}}{a_{yx}}$, and the two Fisher informations are
  $\Fi\as(c)=\tfrac12(a_{12}+a_{21})(\sqrt{c_1/c_1^\mathrm{eq}} - \sqrt{c_2/c_2^\mathrm{eq}})^2$ and
  $\Fi\sa(c)=\tfrac12(c_1/c_1^\mathrm{eq} + c_2/c_2^\mathrm{eq})(\sqrt{a_{12}}-\sqrt{a_{21}})^2$.  We thus see that the forces
  $F(c),F\sym(c)$ are not well-defined on the boundary $c=(1,0),(0,1)$, but the Fisher informations are.
\end{example}

The following proposition then generalises Corollary~\ref{cor:orth split L} to all concentrations, which possibly violate
$\kappa(c)\llgg\overleftarrow\kappa\!_{\bw}(c)\llgg\kappa_{\bw}(c)$. The resulting inequality, bounding Fisher information and
entropy by the rate functional, is known as a FIR inequality~\cite{HilderPeletierSharmaTse2019}. The inequality we prove here corresponds to the
inequality in that paper when choosing their parameter $\lambda=1/2$. Apart from the different proof strategy, our method shows
that the gap is precisely quantified by~\eqref{eq:orth split L sym}, at least for paths that stay away from the boundary,

\begin{proposition} 
  \label{prop:FIR}
  Assume that $\kappa$ satisfies the conditions of~\cite{PattersonRenger2019} needed for the large-deviations Theorem~\ref{th:ldp} to
  hold. Let $\I_0$ be lower semicontinuous. Then for any path $(c,\bar\jmath)\in W^{1,1}(0,T;\RR^\Y_+)\times
  L^1(0,T;\RR^{\R_\fw})$ such that $\I_0$ is continuous in a neighbourhood of $c(0)$,
  \begin{equation}
    \label{eq:FIR}
    \int_0^T\!\L\big(c(t),\bar\jmath(t)\big)\,dt \geq \int_0^T\!\Fi\as\big(c(t)\big)\,dt + \tfrac12\I_0\big(c(T)\big) - 
    \tfrac12\I_0\big(c(0)\big).
  \end{equation}
\end{proposition}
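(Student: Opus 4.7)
The plan is to derive~\eqref{eq:FIR} directly from the orthogonal decomposition~\eqref{eq:orth split L sym} in Corollary~\ref{cor:orth split L}, treating first the regular case where $F\sym(c(t)),F\asym(c(t))$ are well-defined along the path, and then extending to general paths by an approximation argument.

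Step 1 (interior case). Suppose first that $\kappa(c(t))\llgg\overleftarrow\kappa_{\bw}(c(t))\llgg\kappa_{\bw}(c(t))$ holds for a.e.\ $t\in(0,T)$ and that $\I_0$ is differentiable along $c$. Substituting $\bar\jmath(t)=\dot{\bar w}(t)$ into~\eqref{eq:orth split L sym}, the last term reads $\tfrac12\grad\I_0(c(t))\cdot\Gamma\bar\jmath(t)=\tfrac12\grad\I_0(c(t))\cdot\dot c(t)$ because $\dot c=\Gamma\dot{\bar w}$, and by the chain rule this equals $\tfrac12\frac{d}{dt}\I_0(c(t))$. Integration over $(0,T)$ therefore contributes exactly $\tfrac12\I_0(c(T))-\tfrac12\I_0(c(0))$. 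The term $\Phi^*_{F\asym}(c,F\sym(c))$ is $\Fi\as(c)$ by definition.

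Step 2 (Fenchel--Young). The remaining grouping
\begin{equation*}
  \Phi(c,\bar\jmath) + \Phi^*\big(c,F\asym(c)\big) - F\asym(c)\cdot\bar\jmath
\end{equation*}
is non-negative by the Fenchel--Young inequality applied to the Legendre-dual pair $\Phi(c,\cdot),\Phi^*(c,\cdot)$. Dropping this non-negative integrand yields~\eqref{eq:FIR} for paths in the interior.

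Step 3 (approximation). For general paths, the idea is to perturb $(c,\bar\jmath)$ into a path $(c_\varepsilon,\bar\jmath_\varepsilon)$ lying in the set where the forces are well-defined and $\I_0$ is differentiable, while respecting the continuity equation $\dot c_\varepsilon = \Gamma\bar\jmath_\varepsilon$. A convex combination with a reference interior point $c^{\mathrm{eq}}$, such as $c_\varepsilon := (1-\varepsilon)c + \varepsilon c^{\mathrm{eq}}$ together with an appropriate adjustment of $\bar\jmath$, is a natural candidate because $\L(\cdot,\cdot)$ is jointly convex. Apply the interior bound to $(c_\varepsilon,\bar\jmath_\varepsilon)$ and send $\varepsilon\to 0$, using (i) lower semicontinuity and convexity of $\L$ to dominate $\int_0^T\L(c,\bar\jmath)\,dt$ by $\liminf_\varepsilon\int_0^T\L(c_\varepsilon,\bar\jmath_\varepsilon)\,dt$, (ii) continuity/lower semicontinuity of $\Fi\as$ as a finite sum of squares of square roots of the rates, and (iii) lower semicontinuity of $\I_0$ at $c(T)$ and its continuity at $c(0)$ (which is exactly the hypothesis made) to handle the boundary terms.

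The main obstacle is the approximation step: one must choose the regularisation so that (a) the new path remains admissible, i.e.\ of the form $(c_\varepsilon,\dot{\bar w}_\varepsilon)$ with $c_\varepsilon\in W^{1,1}(0,T;\RR^\sY_+)$ solving the continuity equation, (b) $\limsup_{\varepsilon\to 0}\int_0^T\L(c_\varepsilon,\bar\jmath_\varepsilon)\,dt \leq \int_0^T\L(c,\bar\jmath)\,dt$ so that the estimate is not weakened by the perturbation, and (c) the Fisher term $\int_0^T\Fi\as(c_\varepsilon)\,dt$ converges to (or is asymptotically at least) $\int_0^T\Fi\as(c)\,dt$, despite the fact that the individual forces $F\sym,F\asym$ may blow up at the boundary while $\Fi\as$ stays finite. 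Exactly these delicate issues with boundary behaviour are what motivate the continuity assumption on $\I_0$ near $c(0)$ in the statement, and are the reason the approximation rather than the pointwise identity must be invoked.
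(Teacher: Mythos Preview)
Your Steps~1 and~2 are correct and coincide with the paper's core argument: the decomposition~\eqref{eq:orth split L sym} together with Fenchel--Young yields the pointwise inequality $\L(c,\bar\jmath)\ge\Fi\as(c)+\tfrac12\grad\I_0(c)\cdot\Gamma\bar\jmath$ whenever the forces are defined.

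The gap is entirely in Step~3, which you sketch but do not carry out. Two concrete problems: first, your claim that $\L$ is jointly convex in $(c,\bar\jmath)$ is unsupported; the relative entropy $\S(j\mid k)$ is jointly convex in $(j,k)$, but $c\mapsto\kappa(c)$ is in general nonlinear (e.g.\ mass-action kinetics), so joint convexity in $(c,\bar\jmath)$ need not hold, and the convex-combination recovery sequence breaks down. Second, your item~(i) has the inequality the wrong way round: lower semicontinuity gives $\int\L(c,\bar\jmath)\le\liminf\int\L(c_\varepsilon,\bar\jmath_\varepsilon)$, whereas (as you correctly state in~(b)) the opposite direction is required. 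The paper avoids both issues by a genuinely different regularisation: it first keeps the pair $(c,\bar\jmath)$ fixed and perturbs the \emph{rates}, setting $\kappa_r^{(\epsilon)}:=\kappa_r+\epsilon\,\overleftarrow\kappa_r$. The key observation is that the network with propensities $k_r^{(V)}+\epsilon\,\overleftarrow k_r^{(V)}$ has the same invariant measure $\pi^{(V)}$ by linearity and Theorem~\ref{th:time reversal}, hence $\I_0^{(\epsilon)}=\I_0$, so the entropy term needs no limit at all. Continuity of $\Fi\as$ and of $\S$ in the rate argument then gives the inequality for every $c$ with $\kappa(c)\llgg\overleftarrow\kappa_{\bw}(c)$. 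Only afterwards does the paper perturb the path itself (adding mass and convolving with a heat kernel, citing~\cite{PattersonRenger2019}), where the required recovery-sequence property $\int\S(j^{(\epsilon)}\mid\kappa(c^{(\epsilon)}))\to\int\S(j\mid\kappa(c))$ is already known from that reference.
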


\begin{proof}
We use two approximation steps. In the first step we consider $(c,\bar\jmath)\in\RR_+^\Y\times\RR^\R$ for which only
$\kappa(c)\llgg\overleftarrow\kappa\!_{\bw}(c)$ holds. While keeping this pair fixed, we define a new reaction network with
propensities and corresponding reaction rates:
\begin{align*}
  k\super{V,\epsilon}_r(c):=k\super{V}_r(c) + \epsilon\overleftarrow k\!\super{V}_r(c),
  &&\text{and}&&
  \kappa\super{\epsilon}_r(c):=\kappa_r(c) + \epsilon\overleftarrow\kappa\!_r(c).
\end{align*}
Then $\kappa\super{\epsilon}(c)\llgg\overleftarrow\kappa\!\super{\epsilon}_{\bw}(c)\llgg\kappa\super{\epsilon}_{\bw}(c)$,
so~\eqref{eq:orth split L sym} holds. By convex duality~$\Phi(c,\bar\jmath) + \Phi^*\big(c,F\asym(c)\big) - F\asym(c)\cdot
\bar\jmath\geq0$, which yields
\begin{equation*}
  \L\super{\epsilon}(c,\bar\jmath) \geq {\Fi\as}\super{\epsilon}(c) + \tfrac12\grad\I\super{\epsilon}_0\cdot \Gamma\bar\jmath,
\end{equation*}
where the superscript $\epsilon$ denotes the functions corresponding to the modified rates $\kappa\super{\epsilon}$. Recall from
Theorem~\ref{th:time reversal} that the network with rates $\overleftarrow k\super{V}$ has the same invariant measure
$\pi\super{V}$ as the network with rates $k\super{V}$; by linearity the same is true for the network with rates
$k\super{V,\epsilon}$. It follows that the functional $\I\super{\epsilon}_0=\I_0$ remains unaltered. The Fisher information
${\Fi\as}\super{\epsilon}(c)$ is continuous in $\kappa\super{\epsilon}(c)$ and hence converges to $\Fi\as(c)$. Finally for the
cost function it follows by continuity that
\begin{align}
  \label{eq:approx FIR ineq}
  \Fi\as(c) + \tfrac12\grad\I_0(c)\cdot\Gamma\bar\jmath  & =\lim_{\epsilon\to0} {\Fi\as}\super{\epsilon}(c)
  + \tfrac12\grad\I_0\super{\epsilon}(c)\cdot\Gamma\bar\jmath\notag\\
  &\leq\inf_{j\in\RR^\R_+: j_r-j_{\bw(r)}=\bar\jmath_r}\, \limsup_{\epsilon\to0}\, \S\big(j \mid \kappa\super{\epsilon}(c)\big)\notag\\
  &=\inf_{j\in\RR^\R_+: j_r-j_{\bw(r)}=\bar\jmath_r}\,\S\big(j \mid \kappa(c)\big) =\L(c,\bar j).
\end{align}
Hence this inequality holds for any $(c,\bar j)$ for which $\kappa(c)\llgg\overleftarrow\kappa\!_{\bw}(c)$.

In the second step, take an arbitrary $(c,\bar\jmath)\in W^{1,1}(0,T;\RR^\Y_+)\times L^1(0,T;\RR^{\R_\fw})$ such that $\I_0$
is continuous in a neighbourhood of $c(0)$; without loss of generality we may assume that
$\int_0^T\!\L\big(c(t),\bar\jmath(t)\big)\,dt<\infty$. Recall from~\eqref{eq:FD} that
$\kappa(c)\llgg\overleftarrow\kappa\!_{\bw}(c)$ for almost every $c$. We distinguish between two cases. In the first case the path
$c(t)\equiv c$ is constant in time (and hence $\Gamma\bar\jmath=0$). Then we may approximate $c\super\epsilon\to c$ such that
$\kappa(c\super{\epsilon})\llgg\overleftarrow\kappa\!_{\bw}(c\super{\epsilon})$, hence~\eqref{eq:approx FIR ineq} holds for all
$t$, and also in time-integrated form. Then $\I_0(c\super{\epsilon}(T))-\I_0(c\super{\epsilon}(0))=0=\I_0(c(T))-\I_0(c(0))$, and
clearly the Fisher information converges. Then
\begin{align*}
  &\int_0^T\!\Fi\as\big(c(t)\big) + \tfrac12\I_0\big(c(T)\big) - \tfrac12\I_0\big(c(0)\big)\\
  &\qquad =\lim_{\epsilon\to0} \int_0^T\!\Fi\as\big(c\super{\epsilon}(t)\big) + \tfrac12\I_0\big(c\super{\epsilon}(T)\big)
  - \tfrac12\I_0\big(c\super{\epsilon}(0)\big)\\
  &\qquad\leq \limsup_{\epsilon\to0} \int_0^T\L\big(c\super{\epsilon},\bar\jmath(t)\big)\,dt \leq \inf_{\substack{j\in L^1(0,T;\RR^\R_+)\\
      j_r-j_{\bw(r)}=\bar\jmath_r}} \limsup_{\epsilon\to0} \int_0^T\!\S\big(j(t)\mid k(c\super{\epsilon})\big)\,dt\\
  &\qquad = \inf_{\substack{j\in L^1(0,T;\RR^\R_+)\\ j_r-j_{\bw(r)}=\bar\jmath_r}} \limsup_{\epsilon\to0}
  \int_0^T\!\big\lbrack\S\big(j(t)\mid k(c)\big) + \sum_{r\in\R} j_r(t)\log\frac{\kappa(c)}{\kappa(c\super{\epsilon})}\big\rbrack\,dt\\
  &\qquad =\int_0^T\L\big(c,\bar\jmath(t)\big)\,dt.
\end{align*}

In the second case the path $c(t)$ is not constant. Take any $j\in L^1(0,T;\RR_+^\R)$ for which
$j_r-j_{\bw(r)}=\bar\jmath_r$. Following~\cite[Lemmas~3.8 \& 3.9]{PattersonRenger2019} we add a little mass and convolute with a heat
kernel so that $c\super{\epsilon},j\super{\epsilon}\to c,j$ strongly in $L^1$-norm and $\int_0^T\!\S\big(j\super{\epsilon}(t)\mid
\kappa(c\super{\epsilon}(t))\big)\,dt\to \int_0^T\!\S\big(j(t)\mid \kappa(c(t))\big)\,dt$. Since $c(t)$ is not constant, the
convolved path $c\super{\epsilon}(t)$ only passes through points $c(t)$ that violate
$\kappa(c\super{\epsilon}(t))\llgg\overleftarrow\kappa\!_{\bw}(c\super{\epsilon}(t))$ on a $t$-null set. Hence
inequality~\eqref{eq:approx FIR ineq} holds for this pair $(c\super{\epsilon},\bar\jmath\super{\epsilon})$ and almost all $t$, and
hence also in integrated form. The convergence of the Fisher information follows by dominated convergence, where we recall from
Assumption~\ref{ass:kappa} that the rates $\kappa$ are bounded, and the reversed rates $\overleftarrow\kappa$ are bounded by
\eqref{eq:sum kfw is sum kbw}. The convergence of $\I_0(c\super{\epsilon}(T))$ is by lower semicontinuity and the convergence of
$\I_0(c\super{\epsilon}(0))$ by assumption. We conclude that the inequality~\eqref{eq:FIR} holds for any path as claimed.
\end{proof}

We can now also derive an inequality that bounds the other Fisher information together with the irreversible work.

\begin{proposition} 
  Assume that $\kappa$ satisfies the conditions of~\cite{PattersonRenger2019} needed for the large-deviations Theorem~\ref{th:ldp} to
  hold. Then for any arbitrary path $(c,\bar\jmath)\in W^{1,1}(0,T;\RR^\Y_+)\times L^1(0,T;\RR^\R)$ such that
  $\kappa_r(c(t))\llgg\overleftarrow\kappa_r(c(t))$ for almost all $t\in(0,T)$ and $r\in\R$,
\begin{equation*}
  \int_0^T\!\L\big(c(t),\bar\jmath(t)\big)\,dt \geq \int_0^T\!\Fi\sa\big(c(t)\big)\,dt - \int_0^T\!F\asym(c(t))\cdot\bar\jmath(t)\,dt.
\end{equation*}
\end{proposition}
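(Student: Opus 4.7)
The plan is to follow the template of the proof of Proposition~\ref{prop:FIR}, but starting from the other orthogonal decomposition \eqref{eq:orth split L asym} in place of \eqref{eq:orth split L sym}. First I would derive the inequality pointwise under the strong absolute-continuity chain $\kappa(c)\llgg\overleftarrow\kappa\!_\bw(c)\llgg\kappa_\bw(c)$, so that both $F\sym(c)$ and $F\asym(c)$ are well-defined. Using the identity $F\sym(c)\cdot\bar\jmath = -\tfrac12\grad\I_0(c)\cdot\Gamma\bar\jmath$ (from \eqref{eq:force field related to entropy production}), the decomposition \eqref{eq:orth split L asym} rewrites as
\begin{equation*}
\L(c,\bar\jmath)=\bigl[\Phi(c,\bar\jmath)+\Phi^*(c,F\sym(c))-F\sym(c)\cdot\bar\jmath\bigr]+\Fi\sa(c)-F\asym(c)\cdot\bar\jmath,
\end{equation*}
and the bracketed quantity is nonnegative by Legendre duality (Young's inequality for $\Phi$ and $\Phi^*$). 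This gives the pointwise bound at once.

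To relax the hypothesis to the stated $\kappa(c)\llgg\overleftarrow\kappa(c)$, I would reuse the $\epsilon$-perturbation from Proposition~\ref{prop:FIR}: set $\kappa^\epsilon_r:=\kappa_r+\epsilon\,\overleftarrow\kappa\!_r$, so that by \eqref{eq:fw mob is bw mob} the reversed rates become $\overleftarrow{\kappa}^\epsilon_r=\overleftarrow\kappa\!_r+\epsilon\,\kappa_r$, the invariant measure (and hence $\I_0$) is preserved, and the perturbed network satisfies the strong absolute-continuity chain required above. Letting $\epsilon\to0$, continuity in the rates yields $\Fi\sa^\epsilon(c)\to\Fi\sa(c)$ and $F\asym^\epsilon(c)\to F\asym(c)$ pointwise under the assumption, while the dominated-convergence argument for $\L^\epsilon\to\L$ used in the earlier proof transfers verbatim.

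The final step is integration in time. Unlike in Proposition~\ref{prop:FIR}, the inequality has no boundary term and involves no chain-rule expression, so no case distinction between constant and non-constant paths is needed: one simply integrates the pointwise bound and passes $\epsilon\to0$ inside the integral via dominated convergence, using boundedness of $\kappa,\overleftarrow\kappa$ from Assumption~\ref{ass:kappa} together with $\bar\jmath\in L^1$.

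The main obstacle I expect is controlling $F\asym^\epsilon(c(\cdot))\cdot\bar\jmath(\cdot)$ uniformly in $\epsilon$ so that the $\epsilon$-limit can be taken under the time integral. Since $F\asym^\epsilon_r$ is the logarithm of a ratio of the perturbed rates, a $\log\epsilon$-type blow-up on a positive-measure set would be fatal. The standing assumption $\kappa_r(c(t))\llgg\overleftarrow\kappa\!_r(c(t))$, applied to both $r$ and $\bw(r)$, rules this out, so an $\epsilon$-uniform pointwise majorant built from the a.e.-finite limit $F\asym(c(t))$ can be constructed, and dominated convergence closes the argument.
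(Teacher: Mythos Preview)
Your proposal is correct and matches the paper's own proof, which simply says to repeat the argument of Proposition~\ref{prop:FIR} starting from \eqref{eq:orth split L asym} instead of \eqref{eq:orth split L sym}, noting that the hypothesis $\kappa_r(c(t))\llgg\overleftarrow\kappa\!_r(c(t))$ is there precisely to make $F\asym$ well-defined along the path. Your additional observation that the absence of a boundary term removes the need for the constant/non-constant case split is a sensible streamlining that the paper does not spell out.
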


\begin{proof} 
  This is the same argument as Proposition~\eqref{prop:FIR}, but starting from~\eqref{eq:orth split L asym} rather
  then~\eqref{eq:orth split L sym}. Note that the condition $\kappa_r(c(t))\llgg\overleftarrow\kappa\!_r(c(t))$ is now needed for
  $F\asym(c)$ to be well-defined.
\end{proof}

We finally remark that the irreversible work $\int_0^T\!F\asym(c(t))\cdot\bar\jmath(t)\,dt$ does not necessarily have a sign, but
in the other direction we have by convex duality
\begin{equation*}
  \int_0^T\! F\asym(c(t))\cdot \bar\jmath(t)\,dt \leq \int_0^T\!\Phi\big(c(t),\bar\jmath(t)\big)\,dt + 
  \int_0^T\!\Phi^*\big(c(t),F\asym(c(t))\big)\,dt.
\end{equation*}

\section*{Acknowledgements}

DRMR was funded by Deutsche Forschungsgemeinschaft (DFG) through grant CRC 1114 ``Scaling Cascades in Complex
Systems'', Project C08. JZ received funding through a Royal Society
Wolfson Research Merit Award.

\bibliographystyle{alpha}
\bibliography{library}

\end{document}